\title{Reliable Broadcast in Practical Networks: Algorithm and Evaluation}
\author{Yingjian Wu, Haochen Pan, Saptaparni Kumar, Lewis Tseng }
\date{wuit@bc.edu,  haochen.pan@bc.edu,  saptaparni.kumar@bc.edu,  lewis.tseng@bc.edu\\Boston College, MA, USA}
\DeclarePairedDelimiter\ceil{\lceil}{\rceil}
\DeclarePairedDelimiter\floor{\lfloor}{\rfloor}
\algnewcommand{\algorithmicand}{\textbf{ and }}
\algnewcommand{\algorithmicor}{\textbf{ or }}
\algnewcommand{\OR}{\algorithmicor}
\algnewcommand{\AND}{\algorithmicand}
\algnewcommand{\OnlyIf}[2]{\State #1\ \algorithmicif\ #2}
\algnewcommand{\var}{\texttt}
\newcommand{\msg}{\text{MSG}}
\newcommand{\echo}{\text{ECHO}}
\newcommand{\req}{\text{REQ}}
\newcommand{\fwd}{\text{FWD}}
\newcommand{\acc}{\text{ACC}}
\newcommand{\hashTag}{\text{HashTag}}
\newcommand{\codingTag}{\text{Co}}
\newcommand{\hash}{\mathbb{H}}
\newcommand{\mset}{\texttt{MsgSet}}
\newcommand{\cset}{\texttt{CodeSet}}
\newcommand{\hset}{\texttt{HashSet}}
\newcommand{\ctr}{\texttt{Counter}}
\newcommand{\true}{\texttt{True}}
\newcommand{\false}{\texttt{False}}
\newcommand{\oks}{\texttt{OK-Set}}
\newcommand{\nts}{\texttt{NT-Stars}}
\newcommand{\GG}{\mathbb{G}}
\theoremstyle{definition}
\newtheorem{theorem}{Theorem}
\newtheorem{lemma}[theorem]{Lemma}
\newtheorem{claim}[theorem]{Claim}
\newtheorem{property}[theorem]{Property}
\newtheorem{observation}[theorem]{Observation}
\newcommand{\sapta}[1]{{\color{red} **#1**}}
\newcommand{\lewis}[1]{{\color{blue} #1}}
\newcommand{\roger}[1]{{\color{cyan} #1}}
\newcommand{\commentOut}[1]{}
\begin{document}

\maketitle
\begin{abstract}


Reliable broadcast is an important primitive to ensure that a source node can reliably disseminate a message to all the non-faulty nodes in an asynchronous and failure-prone networked system. 
Byzantine Reliable Broadcast protocols were first proposed by Bracha in 1987, and have been widely used in fault-tolerant systems and protocols. Several recent protocols have improved the round and bit complexity of these algorithms.
    
Motivated by the constraints in practical networks, we revisit the problem.
In particular, we use \textit{cryptographic hash functions} and \textit{erasure coding} to reduce communication and computation complexity and simplify the protocol design. We also identify the fundamental trade-offs of Byzantine Reliable Broadcast protocols with respect to resilience (number of nodes), local computation, round complexity, and bit complexity.


Finally, we also design and implement a general testing framework for similar communication protocols. We evaluate our protocols using our framework. The results demonstrate that our protocols have superior performance in practical networks.



\commentOut{
metrics (y-axis): latency, throughput

1 Crash RB vs. Byzantine RB? 
    1, 2, 3, 4, 7, 8, 9 (0ms, 20ms)  -- done

2 Byz RB vs. Hash RB?

3 Hash RB vs. EC RB?
    1, 2, 3, 4, 8 (0ms, 20ms)  -- done

4 Vanilla vs. Chocolate? (Hash)
    1, 2 (0ms, 20ms)  -- done

5 Async. vs. Sync.? NCBA, Digest vs. Ours
    1-9 (0ms, 20ms)  -- done

6 Topology: single switch, linear, tree, fat-tree
    -- done
    
7 Impact of faulty behavior: equivocate or not?
    -- done
    
8 Impact of slow link
    2, 4, 8 (0ms, 20ms, 40ms)

9 BW limitation from source to others: optimal vs. others
    (x-axis: bw limitation)
    
10 Scalability: 4, 6, 8, 10    
    alternative: Scalability: 5, 7, 9
+++++++++=}
\end{abstract}

\section{Introduction}


We consider the reliable broadcast (RB) problem in an asynchronous message-passing system of $n$ nodes.
Intuitively, the
RB abstraction ensures that no two non-faulty nodes deliver different messages reliably broadcast by a source (or sender), either all non-faulty nodes deliver a message or none does, and that, if the source is non-faulty, all non-faulty nodes eventually deliver the broadcast message. Several fault-tolerant distributed applications~\cite{DuanRZ18,AttiyaCEKW19,CachinP02} require communication with provable guarantees on message deliveries (e.g., all-or-nothing property, eventual delivery property, etc.). 
Since Bracha's seminal work in 1987 \cite{Bracha:1987:ABA:36888.36891}, many Byzantine-tolerant RB (or simply Byzantine RB) protocols have been proposed, which improve metrics including bit complexity and round complexity. These results are summarized in Table \ref{table:result}. Here, we assume a message is of $L$ bits in size. For a detailed study of the related work please refer to Section \ref{s:related}.

To the best of our knowledge, none of the prior works have studied and evaluated the reliable broadcast (RB) protocols in a practical setting assuming reasonable local computation power and finite bandwidth. 
Toward this ends, we identify fundamental trade-offs, and use \textit{cryptographic hash functions} \cite{abs-1806-04437} and \textit{erasure coding} \cite{verapless_book}  to design more efficient algorithms. We also build a general evaluation tool on top of Mininet \cite{Mininet1, Mininet2}  to conduct a comprehensive evaluation under practical constraints. 
One goal of this paper is to provide guidance and reference for practitioners that work on fault-tolerant distributed systems. In particular, our results shed light on the following two questions:

\begin{itemize}
    \item Does there exist an RB protocol that achieves optimality in all four main metrics --  bit complexity, round complexity, computation complexity and resilience?
    
    \item What is the performance of RB protocols in a realistic network?
\end{itemize}



\subsubsection*{Motivation}


This work is motivated by the following observations when we tried to apply fault-tolerant RB protocols in practice:

\begin{itemize}
    \item Existing reliable broadcast mechanisms are \textit{not} efficient in terms of bandwidth usage and/or computation. (See Table \ref{table:result})
    
    \item Most RB protocols~\cite{Bracha:1987:ABA:36888.36891, BirmanJ87,ChangM84} assume unlimited bandwidth, and use flooding-based algorithms that send unnecessary redundant messages.
    
    \item RB and Byzantine consensus protocols (e.g.,
    \cite{Liang2012}, \cite{PatraR11}, \cite{abs-2002-11321}) that are proved to have optimal bit complexity usually have high round complexity and local computation.
    
    \item Theoretically speaking, if a protocol relies on a cryptographic hash function \cite{Preneel1CH} to ensure correctness, then it is \textit{not} always error-free, since it assumes that the adversary has limited computational power. However, we think this is acceptable in practical systems as many real-world systems use cryptographic hash functions, e.g., Bitcoin \cite{Nakamoto}.
    
    \item In many scenarios, the source may not reside in the system, and the bandwidth between the source and other nodes is usually more limited compared to bandwidth between two non-source nodes. For example, source could be a client or client proxy for a distributed storage system, and communicate with other nodes through Internet and non-source nodes communicate through highly optimized datacenter network. 
\end{itemize}

\subsubsection*{Main Contributions}
Motivated by our observations, we propose a family of algorithms that use hash function and erasure coding to reduce bit, round, and computation complexity.
\begin{itemize}
    \item \textit{EC-CRB}: Crash-tolerant erasure coding-based RB (Sec. \ref{app:EC-CRB})
    \item \textit{H-BRB}: Hash-based Byzantine-tolerant  RB (Sec. \ref{s:H-BRB})
    \item \textit{EC-BRB}: Byzantine-tolerant erasure coding-based RB (Sec. \ref{s:ec-BRB})
\end{itemize}
Table~\ref{table:result} provides a summary of our results and compares our results to prior work. Our EC-based protocols use $[n,k]$ MDS erasure codes. Please refer to Section \ref{s:MDS} for a preliminary on MDS codes.

Our Byzantine RB's bit complexity is listed as $O(nL + nfL)$, because in  cases when the source is non-faulty and the delay is small, the complexity is $O(nL)$. Only in unfortunate scenarios where the source equivocates, or some messages are lost, our protocols need to perform a recovery mechanism which incurs $O(fL)$ extra bits per node. We believe our protocols are appropriate in practice, as most systems assume small $f$. Our EC-based protocols have another advantage over other protocols; the bandwidth consumption between the source and other nodes is only $O(nL/k)$.

The rest of the paper is organized as follows: Section~\ref{section:model} introduces our models, notations and problem specification. In Sections \ref{s:H-BRB} and \ref{s:ec-BRB}, we present our main algorithmic results on Byzantine RB protocols. In Section~\ref{section:impossibility}, we present two impossibility results proving the optimality of our algorithms (in certain aspects). These impossibilities together also imply that there is \textit{no} RB protocol that achieves optimality in all four main metrics. Our benchmark framework, Reliability-Mininet-Benchmark (RMB), and evaluation results are detailed in Section~\ref{section:evaluation}. 


 

\begin{table*}[hptb!]
\begin{tabular}{|c|c|c|c|c|c|c|}
\hline
Algorithm                                                            & \begin{tabular}[c]{@{}c@{}}Bit\\ complexity\end{tabular} & \begin{tabular}[c]{@{}c@{}}System Size\\  (Resilience)\end{tabular} & \begin{tabular}[c]{@{}c@{}}Round \\ Complexity\end{tabular} & \begin{tabular}[c]{@{}c@{}} Error\\-free\end{tabular} & \begin{tabular}[c]{@{}c@{}}Uses \\ MDS  \\codes\end{tabular} & Bottleneck \\ \hline
\begin{tabular}[c]{@{}c@{}}
CRB \cite{Raynal18}\end{tabular}            &  $O(n^2L)$   & $\geq f+1$ & $1$ & Yes  & No & - \\ \hline
\begin{tabular}[c]{@{}c@{}}EC-CRB\end{tabular} 
  & $O(n^2L/k)$ & $\geq f+1$& 2  & Yes & Yes & MDS code \\ \hline

Bracha RB \cite{Bracha:1987:ABA:36888.36891} & $O(n^2L)$   & $\ge3f+1$   &  $3$    &  Yes  &      No   & Flooding \\ \hline
Raynal RB \cite{imbs2016trading} & $O(n^2L)$   & $\ge3f+1$  & $2$      &  Yes    &      No   & Flooding\\ \hline

Patra  RB \cite{PatraR11}& $O(nL)$ & $\geq 3f+1$ &  $9$ 
& Yes & Yes & \begin{tabular}[c]{@{}c@{}}Polynomial time \\local computation\\with large constants \end{tabular}\\ \hline
Nayak et al.\cite{abs-2002-11321}& $O(nL)$ & $\geq 3f+1$ &  $10$ 
& Yes & Yes & \begin{tabular}[c]{@{}c@{}}Polynomial time \\local computation\\with large constants \end{tabular}\\ \hline
H-BRB[3f+1] & $O(nL) + O(nfL)$ & $\geq 3f+1$& $3$  & No & No & Hash Function\\ \hline
H-BRB[5f+1] & $O(nL) + O(nfL)$ & $\geq 5f+1$& $2$  & No & No & Hash Function\\ \hline
EC-BRB[3f+1] & $O(nL) + O(nfL)$ & $\geq 3f+1$& $3$  & No & Yes & \begin{tabular}[c]{@{}c@{}}Hash Function \\+ MDS code \end{tabular}\\ \hline
EC-BRB[4f+1] & $O(nL) + O(nfL)$ & $\geq 4f+1$& $3$  & No & Yes & \begin{tabular}[c]{@{}c@{}}Hash Function \\+ MDS code \end{tabular}\\ \hline

\end{tabular}
\caption{\\Summary of our contributions and a comparison with previous work}
\label{table:result}
\end{table*}
    

\section{Preliminaries}\label{section:model}
\subsection{Model and Notations}

We consider a static asynchronous message-passing system composed of a fully connected network of $n$ nodes, where up to $f$ nodes may be \textit{Byzantine} faulty. 

\paragraph{Network}
Nodes are sequential and fully connected by reliable and authenticated point-to-point channels in an asynchronous network. ``Asynchronous” means
that nodes do not have access to a global clock (or wall-clock time), and  each node proceeds at its own speed, which can vary arbitrarily with real time. 
Reliable channel ensures that (i) the network cannot drop a message if both sender and receiver are non-faulty, and (ii) a non-faulty node receives a message if and only if another node sent the message.  Authentication ensures that the sender of each message can be uniquely identified and a faulty node cannot send a message with a fake identity (as another node) \cite{Lynch96,Raynal18}.

In an asynchronous network, there is no known upper bound on the message delay. 
However, a message sent by a non-faulty node to another non-faulty node will eventually be delivered due to the reliability channel assumption.
When we say a node sends a message to all nodes, we assume that it also sends to itself. Note that this is achieved by performing multiple unicasts; hence, there is no guarantee on the delivery if the sender is faulty.

\paragraph{Fault Model} 
 A Byzantine node is a node
that behaves arbitrarily: it may crash, fail to send or receive messages, start in an arbitrary state, perform arbitrary state transitions, etc. 
A Byzantine node may have the power to \textit{equivocate}, i.e., send arbitrary messages to different sets of nodes. For example, when a Byzantine source node, $s$ sends a message, $m$ to all the nodes, it can equivocate and send a message $m_1$ to some nodes, a different message $m_2$ to some other nodes, and no message at all to the other nodes.  A node that exhibits a Byzantine behavior is also called faulty. Otherwise, it is  non-faulty. In our model, up to $f$ nodes can exhibit Byzantine behavior.

\commentOut{
\paragraph{Hash Function}
\lewis{Move hash part to Section 4.}
Some of our algorithms utilize an ideal cryptographic hash function, specifically with the following properties: 1) pre-image resistance (one-wayness) \sapta{add citations for these properties \lewis{We might not need it. It's pretty standard stuff. Maybe check how other papers did it. I don't think we need to list all three properties. Maybe just one-wayness?}} 2) collision resistance and 3) second pre-image resistance. 
Every node runs the exact same hash function. A good pick for such function would accelerate the hash computation and henceforth would accelerate the reliable broadcast node. 
Though SHA2 and SHA3 families of hash functions cannot be proven with these properties, they can be used in the practical sense. 

}
\paragraph{Notations}
Every message $m$ sent by a non-faulty source $s$ is associated with a sequence number or index 
$h$. Thus $m$ can be uniquely indexed through a tuple $(s, h)$ in the system due to the message authentication assumption discussed above. 
For example, in the distributed data store context, $h$ could be the key of the message or a sequence number associated with the message. 

In all of our algorithms, we use 
$\mset_i[s,h]$
to denote the set of messages that the node $i$ collects, in which are candidates that can be identified with $(s, h)$. 
When the context is clear, we omit the subscript $i$. 
We use 
$\ctr[*]$
to denote a local counter of certain type of messages that is initialized to 0.
We use $\mathbb{H}(*)$ to denote the cryptographic hash function. 

\begin{figure}
    \centering
    \includegraphics[width=0.6\textwidth]{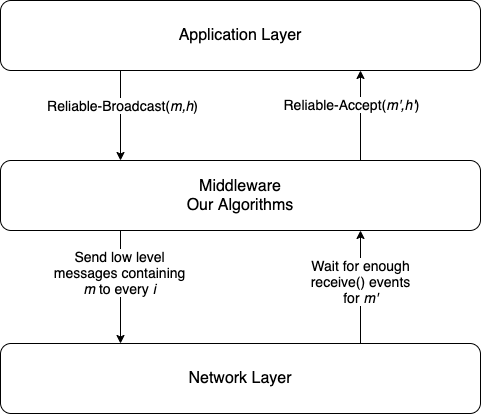}
    \caption{The reliable broadcast abstraction at a single node}
    \label{fig:layer}
\end{figure}

\subsection{Properties of Reliable Broadcast}\label{subsection:properties}

We adopt reliable broadcast properties from \cite{10.1007/11516798_17, Bracha:1987:ABA:36888.36891, DBLP:journals/corr/ImbsR15}. Each node consists of three layers: network, middleware (RB protocols), and application layers depicted in Figure \ref{fig:layer} (adapted from \cite{Raynal18}). The protocol at the source $s$ calls ``Reliable-Broadcast($m,h$)'' to broadcast a message $m$ with sequence number $h$  
reliably. Then, the middleware layer executes the RB protocol by exchanging messages with other nodes. For each non-faulty node, upon receiving enough messages of a certain type, the RB protocol would tell the application layer it can ``Reliable-Accept($m',h$)''. 

A reliable broadcast (RB) protocol is said to be \textit{correct} if it satisfies the following five properties.

\begin{property}[Non-faulty Broadcast Termination]
If a non-faulty source $s$ performs
Reliable-Broadcast$(m, h)$, with a message $m$ having index $h$ then all non-faulty nodes will eventually Reliable-Accept$(s, m, h)$.

\end{property}

\begin{property}[Validity]
If a non-faulty source $s$ does not perform Reliable-Broadcast$(m, h)$ then no non-faulty node will ever perform Reliable-Accept$(s, m, h)$.

\end{property}

\begin{property}[Agreement]
If a non-faulty node performs Reliable-Accept$(s, m, h)$ and another non-faulty node will eventually perform Reliable-Accept$(s, m', h)$ then $m = m'$.

\end{property}

\begin{property}[Integrity]
A non-faulty node reliably accepts at most one message of index $h$ from a source $s$.

\end{property}


\begin{property}[Eventual Termination]
If a non-faulty node performs Reliable-Accept$(s, m, h)$, then all non-faulty nodes eventually perform Reliable-Accept$(s, m, h)$.

\end{property}

Note that if the source is faulty, then it is possible that no non-faulty node would ever reliably accept its message. This is the main different between Byzantine RB problem, and Byzantine broadcast (or agreement) problem in the synchronous systems \cite{psl_BG_1982,Lynch96,Raynal18}. In the Byzantine broadcast problem, each non-faulty node has to output a value, whether the source is faulty or not.



\section{Erasure Coding-based Crash-tolerant RB}
\label{app:EC-CRB}
In this section, we present a simple idea that augments the original crash-tolerant reliable broadcast (CRB) \cite{Lynch96} with erasure coding. The new protocol is named \textit{EC-CRB}.

\subsection{EC-CRB: Algorithm}

We present our algorithms here. EC-CRB will use $[n,k]$ MDS code. Source's logic is simple and its code is presented in Algorithm \ref{alg:EC-CRB-source}. To send a message $m$ with sequence number $h$, it encodes the message $m$ and then disseminates to each peer. The message is a tuple that contains the tag $\msg$, the source identifier $s$, corresponding coded element, and sequence number of the message $h$. 

To deal with asynchrony and failures, our algorithm is event-driven, similar to prior algorithms \cite{Bracha:1987:ABA:36888.36891,Raynal18}. 
The pseudo-code for peers and the source when receiving a message from the sender $j$ is presented in Algorithm \ref{alg:EC-CRB-peer}. 
First, upon receiving a coded element from the source, node $i$ forwards an $\echo$ message along with the coded element. Second, upon receiving an $\echo$ message, node $i$ decodes the message if it has received enough number of coded elements. 

The key design behind how crash-tolerant RB achieves the all-or-nothing property is that each peer needs to (pessimistically) help deliver the message to other peers. EC-CRB achieves this at Line \ref{line:crash-helper} in Algorithm \ref{alg:EC-CRB-peer}.
There are several designs that affect the complexity.

\begin{algorithm}
\caption{EC-CRB: source $s$ with message $m$ of index $h$}
\label{alg:EC-CRB-source}
\begin{algorithmic}[1]
\Function{Reliable-Broadcast}{$m, h$}
    \State $\{c_1, c_2, \dots, c_n\} \gets ENC(m)$ \Comment{Encoding message}
    \For{each $i$}
    \State {\sc Send}($\msg, s, c_i, h$) to node $i$
    \EndFor
\EndFunction
\end{algorithmic}
\end{algorithm}

\begin{algorithm}
\caption{EC-CRB: all node $i$ (including $s$) when receiving a message from sender $j$}
\label{alg:EC-CRB-peer}
\begin{algorithmic}[1]
\Function{Receiving}{$\msg, s, c, h$}
    \State {\sc Send}($\echo, s, c, h$) to all nodes 
    \State $\cset[s, h] \gets \cset[s,h] \cup \{c\}$
\EndFunction

\Function{Receiving}{$\echo, s, c, h$}
    \State $\cset[s, h] \gets \cset[s,h] \cup \{c\}$
    \If{$|\cset[s,h]| \geq k$ for the first time}
    \State
    $m \gets DEC(\cset[s,h])$ \Comment{Decoding}
    \State {\sc Reliable-Accept($s,m,h$)}
    \State
    {\sc Send} $(\acc, s, m, h)$ to all peers \label{line:crash-helper}
    \EndIf
\EndFunction
\end{algorithmic}
\end{algorithm}

\subsection{EC-CRB: Correctness and Complexity}

It is not difficult to see that EC-CRB is correct as long as $k \geq n-f$, since a node needs to have at least $k$ coded elements to correctly recover the original message, and in our model, a node can wait up to $n-f$ $\echo$ messages. As mentioned above, agreement property is achieved due to Line \ref{line:crash-helper} in Algorithm \ref{alg:EC-CRB-peer}.

The message complexity is $O(n^2)$. The round complexity is $2$. The bit complexity is $O(\frac{n^2 L}{k})$. For large enough $k$, the bit complexity becomes $O(nL)$.

\section{Hash-based Byzantine RB}
\label{s:H-BRB}

Crash-tolerant RB has been well-studied \cite{Lynch96,Raynal18}. For completeness, we present an erasure coding-based crash-tolerant RB (EC-CRB). This section focuses on Hash-based Byzantine-tolerant Reliable Broadcast (H-BRB) protocols.
In particularly, we present  H-BRB[3f+1] which uses a cryptographic hash function to reduce communication complexity. The name contains``3f+1'', because this protocol requires the system size $n \geq 3f+1$ for correctness. We also present the intuition of H-BRB[5f+1], which is correct if $n \geq 5f+1$. Compared to H-BRB[3f+1], it requires less number of rounds and messages. 

\subsubsection*{Byzantine Reliable Broadcast: Challenges}

We begin with the discussion on the difficulty of implementing a Byzantine RB protocol, and why most prior algorithms are \textit{not} practical due to prohibitively high bandwidth consumption. A Byzantine faulty node has a great deal of adversarial power. For example, it can equivocate and send out contradicting messages to different sets of nodes in the system. These nodes may collude to create a chain of misinformation and thus no information unless verified by at least $f+1$ nodes may be fully trusted. 

There are asymptotically tight algorithms in terms of either bit complexity or resilience or round complexity in the literature. Unfortunately many of them have high local computation~\cite{Patra11,PatraR11} 
and a large bandwidth consumption due to flooding of messages
~\cite{Bracha:1987:ABA:36888.36891,imbs2016trading}, which can be detrimental to practical networks with limited bandwidth. 

\subsubsection*{Cryptographic Hash Function}
All of our algorithms utilize an ideal cryptographic hash function. In particular, the correctness of our algorithms rely on the collision-resistant property of the hash function used. From a theoretical point of view, our algorithms are not error-free, as the adversary cannot have unlimited computation power. As discussed earlier, cryptographic hash functions are used widely in real-world applications. We believe it is reasonable to adopt this technique in designing more practical RB protocols.

Every node runs the same hash function. A good pick for such a function would accelerate the hash computation and henceforth accelerate the reliable broadcast process. By convention, the output of a hash function is of constant size.
Though SHA-2 and SHA-3 families of hash functions cannot be proven with these properties, they can be used in the practical sense.

\subsection{H-BRB[3f+1]}

\subsubsection*{H-BRB[3f+1]: Algorithm}

\begin{algorithm}
\caption{H-BRB[3f+1]: source $s$ with message $m$ of index $h$}
\label{alg:H-BRB_3f+1-source}
\begin{algorithmic}[1]
\Function{Reliable-Broadcast}{$m, h$}
    \State {\sc Send}($\msg, s, m, h$) to all nodes     \label{bo-rb-source-send-msg}
\EndFunction
\end{algorithmic}
\end{algorithm}

The pseudo-code of H-BRB[3f+1] is presented in Algorithms \ref{alg:H-BRB_3f+1-source}, \ref{alg:H-BRB_3f+1-peer}, and \ref{alg:H-BRB_3f+1-peer-check}. In Algorithm \ref{alg:H-BRB_3f+1-source}, the source node simply sends a $\msg$ message containing its identifier, message content $m$, and the sequence number $h$, to all the nodes. Following the convention, we assume that the source also sends the message to itself.

Algorithm \ref{alg:H-BRB_3f+1-peer} specified how all the nodes (including $s$) process incoming messages. Each node may receive five types of messages:

\begin{itemize}
    \item $\msg$ message: this must come directly from the source which contains the message content $m$. If the source identifier does not match the sender identifier, then the message is discarded.
    \item A \textit{helper($m$)} message is a constant sized message created from some arbitrary function $f(m)$. In our algorithms, the  function $\hash$ is used to create helper messages. The helper messages used in our algorithms are ECHO, ACC and REQ messages:
    \begin{itemize}
        \item  $\echo$ message: this message propagates information about a message already received by some node. In \cite{Bracha:1987:ABA:36888.36891,Raynal18}, $\echo$ messages contain the full content $m$. In our hash based algorithms, we only transmit $\hash(m)$. This is the main reason that we are able to reduce bit complexity.
    
        \item $\acc$ message: similar to  \cite{Bracha:1987:ABA:36888.36891}, this message is used to declare to other nodes when a some node is ready to accept a  message $m$. Again, instead of sending $m$ with the $\acc$ message, we send $\hash(m)$.
    
         \item $\req$ messages: In our hash based approach, a node might not know the original message $m$, even after it has observed enough $\acc(m)$  messages supporting it. 
         Therefore, such a node needs to use $\req(\hash(m))$ message to fetch the original message content from some non faulty node before accepting it.
    \end{itemize}
    
    
    
    \item  $\fwd$ messages:
     When a node is sent a $\req(\hash(m))$ message, it replies with a $\fwd(m)$ message, that contains the original message content of $m$. 
\end{itemize}



\begin{algorithm}[t]
\caption{H-BRB[3f+1]: all node $i$ (including $s$) when receiving a message from node $j$}
\label{alg:H-BRB_3f+1-peer}
\begin{algorithmic}[1]
\Function{Receiving}{$\msg, s, m, h$}
    \If{$j = s$\AND first $(\msg, s, *, h)$}              \label{bo-msg-first-msg}
        \State $\mset[s, h] \gets \mset[s, h] \cup \{m\}$       \label{bo-msg-recv-msg-tho-source}
        \State $\ctr[\echo, s, \hash(m), h] ++$
        \If{never sent $(\echo, s, *, h)$}                  \label{bo-msg-not-send-echo}
            \State {\sc Send}  ($\echo, s, \hash(m), h$) to all nodes
        \EndIf
    \EndIf
\EndFunction

\Function{Receiving}{$\echo, s, H, h$}
    \If{first ($\echo, s, *, h$) from $j$}             \label{ch-echo-fist-echo}
        \State $\ctr[\echo, s, H, h] ++$
        \State \textsc{Check}($s, H, h$)
    \EndIf
\EndFunction

\Function{Receiving}{$\acc, s, H, h$}
    \If{first ($\acc, s, *, h$) from $j$}              \label{ch-acc-first-acc}
        \State $\ctr[\acc, s, H, h] ++$
        \If{$\ctr[\acc, s, H, h] = f + 1$}      \label{ch-acc-send-req-tho-f+1-acc}
            \If{$\not \exists m' \in \mset[s, h]$ s.t. $\hash(m') = H$}
                \State {\sc Send}  ($\req, s, H, h$) to these $f + 1$ nodes 
            \EndIf
        \EndIf
        \State \textsc{Check}($j, H, h$)
    \EndIf
\EndFunction

\Function{Receiving}{$\req, s, H, h$}
    \If{first ($\req, s, h$) from $j$}                     \label{bo-req-first-req}
        \If{$\exists m' \in \mset[s, h]$ s.t. $\hash(m') = H$}
            \State {\sc Send}  ($\fwd, s, m', h$) to $j$
        \EndIf
    \EndIf
\EndFunction

\Function{Receiving}{$\fwd, s, m, h$}
    \If{have sent ($\req, s, \hash(m), h$) to $j$}         \label{bo-fwd-have-send-req}
        \If{first ($\fwd, s, m, h$) from $j$}              \label{bo-fwd-first-fwd}
            \State $\mset[s, h] \gets \mset[s, h] \cup \{ m \}$ \label{bo-fwd-recv-msg-tho-fwd}
            \State \textsc{Check}($s, \hash(m), h$)
        \EndIf
    \EndIf
\EndFunction

\end{algorithmic}
\end{algorithm}

\begin{algorithm}
\caption{H-BRB[3f+1]: helper function for all node $i$ (including $s$)}
\label{alg:H-BRB_3f+1-peer-check}
\begin{algorithmic}[1]

\Function{Check}{$s, H, h$}
    \If{$m \in \mset[s,  h]$  s.t. $\hash(m) = H$}\label{ch-check-msg-in-set}     
        \If{$\ctr[\echo, s, H, h] \geq f + 1$}
            \If{never sent ($\echo, s, *, h$)}
                \State {\sc Send}  ($\echo, s, H, h$) to all nodes
            \EndIf
        \EndIf
        \If{$\ctr[\echo, s, H, h] \geq n - f$}   \label{ch-check-send-acc-tho-n-f-echo}
            \If{never sent ($\acc, s, *, h$)}
                \State {\sc Send}  ($\acc, s, H, h$) to all nodes
            \EndIf
        \EndIf
        \If{$\ctr[\acc, s, H, h] \geq f + 1$}    \label{ch-check-send-acc-tho-f+1-acc}
            \If{never sent ($\acc, s, *, h$)}
                \State {\sc Send}  ($\acc, s, H, h$) to all nodes
            \EndIf
        \EndIf
        \If{$\ctr[\acc, s, H, h] \geq n - f$}    \label{ch-check-ra}
            \State \textsc{Reliable-Accept}($s, m, h$)
        \EndIf
    \EndIf
\EndFunction

\end{algorithmic}
\end{algorithm}



\subsubsection*{Correctness of H-BRB[3f+1]} 

\begin{theorem}
H-BRB[3f+1] satisfies Property 1-5 given that $n \geq 3f+1$.
\end{theorem}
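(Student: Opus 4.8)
The plan is to verify each of the five properties in turn, reasoning about the thresholds $f+1$ and $n-f$ and crucially using $n \geq 3f+1$ (so that $n-f \geq 2f+1$, i.e., two sets of size $n-f$ must intersect in at least $f+1$ nodes, and any set of size $f+1$ contains at least one non-faulty node). First, for \emph{Validity}: if a non-faulty source $s$ never performs Reliable-Broadcast$(m,h)$, then no non-faulty node ever receives a valid $(\msg, s, *, h)$ from $s$ (message authentication, and the sender-equals-source check on Line \ref{bo-msg-first-msg}); hence no non-faulty node ever puts anything in $\mset[s,h]$ via Line \ref{bo-msg-recv-msg-tho-source}, never sends a genuine $\echo$, so no non-faulty node's $\echo$ counter for any $H$ reaches $f+1$ (at most $f$ Byzantine echoes), and thus no non-faulty node sends $\acc$, the $\acc$ counter never reaches $n-f$, and Line \ref{ch-check-ra} never fires. (One must also argue that $\fwd$ cannot inject a message: a non-faulty node only accepts $\fwd$ if it previously sent the matching $\req$, which it only does after $f+1$ $\acc$'s, which as just argued cannot happen.) This needs to be made into a clean induction on "the first non-faulty node to do X."

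\emph{Integrity} is immediate from the guard "never sent $(\acc,s,*,h)$" and the structure of \textsc{Check}: once a non-faulty node calls Reliable-Accept$(s,m,h)$ it has fixed $m$ as the unique element of $\mset[s,h]$ with $\hash(m)=H$ for the relevant $H$; collision resistance of $\hash$ rules out a second $m' \ne m$ with $\hash(m')=H$, and distinct $H$'s would each require $n-f$ $\acc$'s, impossible simultaneously since (sketching) a non-faulty node sends at most one $\acc$. \emph{Non-faulty Broadcast Termination}: if $s$ is non-faulty and broadcasts $(m,h)$, all $\geq n-f$ non-faulty nodes receive $(\msg,s,m,h)$, add $m$ to $\mset[s,h]$, and send $(\echo,s,\hash(m),h)$; every non-faulty node therefore eventually sees $\geq n-f$ echoes for $H=\hash(m)$, sends $(\acc,s,H,h)$, and eventually receives $\geq n-f$ such $\acc$'s, so Line \ref{ch-check-ra} fires with the correct $m$ (it has $m\in\mset[s,h]$ directly from the source, so the $\req/\fwd$ detour is not needed). \emph{Agreement} follows from quorum intersection: if non-faulty $i$ accepts $(s,m,h)$ and non-faulty $i'$ accepts $(s,m',h)$, each collected $n-f$ $\acc$'s; a non-faulty node sends $\acc$ for at most one hash value (guard "never sent $(\acc,s,*,h)$"), and it sends it only if it saw $n-f$ echoes or $f+1$ $\acc$'s for that hash. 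One shows the first non-faulty node to send an $\acc(H)$ must have done so via the $n-f$-echo rule, hence via $f+1$ honest echoes, hence some honest node held an $m$ with $\hash(m)=H$; chasing this back, all honest $\acc$'s carry the same $H$, so $H=H'$ and collision resistance gives $m=m'$.

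\emph{Eventual Termination} is where the $\req/\fwd$ machinery earns its keep, and I expect this to be the main obstacle. Suppose non-faulty $i$ performs Reliable-Accept$(s,m,h)$; then $i$ saw $n-f$ $\acc$'s for $H=\hash(m)$, so at least $f+1$ of them came from non-faulty nodes, and (by the Agreement argument above) every non-faulty node that ever sends $\acc$ for index $h$ sends it for this same $H$. A standard "echo amplification" step shows every non-faulty node eventually sends $(\acc,s,H,h)$: each of the $\geq f+1$ honest $\acc(H)$ senders reaches every non-faulty node, triggering Line \ref{ch-check-send-acc-tho-f+1-acc}, so every non-faulty node eventually holds $\geq n-f$ honest $\acc(H)$'s. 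The delicate point is that \textsc{Check} only accepts when additionally $m \in \mset[s,h]$ with $\hash(m)=H$ (Line \ref{ch-check-msg-in-set}), and a non-faulty node that never received $(\msg,s,m,h)$ directly from the source (possible if $s$ is Byzantine and equivocates) must recover $m$ via $\req/\fwd$: upon its $(f+1)$-st $\acc(H)$ it sends $\req(s,H,h)$ to those $f+1$ senders, at least one of which is non-faulty and — here one must check — actually holds an $m'$ with $\hash(m')=H$ in its $\mset[s,h]$, so it replies with $\fwd(s,m',h)$, and the requester adds $m'$ to $\mset[s,h]$ and then passes Line \ref{ch-check-msg-in-set}. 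The crux is proving that every non-faulty node which sends $\acc(H)$ does in fact have a preimage of $H$ in its $\mset$ at that time: this is exactly the content of the guard on Line \ref{ch-check-msg-in-set} being required before any $\acc$ is sent in \textsc{Check}, combined with the fact that the \emph{first} honest $\acc(H)$ sender obtained its preimage either from the source or — no, only from the source or via $\fwd$ — so the base case of this induction is the honest node that triggered its $\acc$ via the $n-f$-echo rule, whose $f+1$ honest echoes were themselves only sent by nodes holding a preimage of $H$. I would formalize this as a single lemma: "if a non-faulty node ever sends $(\acc,s,H,h)$ or $(\echo,s,H,h)$, then at that moment it has some $m\in\mset[s,h]$ with $\hash(m)=H$," proved by induction on the event order, and then Eventual Termination is a short corollary. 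Collision resistance of $\hash$ is invoked throughout to conclude $m=m'$ from $\hash(m)=\hash(m')$, which is why the theorem is not error-free in the information-theoretic sense.
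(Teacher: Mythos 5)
Your proposal is correct and follows essentially the same route as the paper: the same key lemma that no two non-faulty nodes send $\acc$ messages for different hashes (proved by intersecting $n-f$ echo quorums), the observations that $\mset[s,h]$ is controlled by the authenticated channel when $s$ is non-faulty, and the same detailed Eventual Termination argument in which $f+1$ honest $\acc$'s amplify to all nodes and the $\req$/$\fwd$ mechanism recovers the preimage from a non-faulty $\acc$-sender that necessarily holds it. Your treatment is, if anything, somewhat more explicit than the paper's (which dispatches Properties 1--4 as "following directly" from its three lemmas), but no new idea is introduced.
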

 
We begin the proof with three important lemmas. The first two follow directly from the reliable and authenticated channel assumption and the thresholds we used.

\begin{lemma} If a non-faulty source $s$ performs Reliable-Broadcast($m,h$), then $\mset_i[s, h] \subseteq \{m\} $ at each non-faulty node $i$. 
\end{lemma}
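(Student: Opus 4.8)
The plan is to prove the contrapositive-style statement that no non-faulty node ever inserts a message $m' \neq m$ into $\mset[s,h]$, by a minimal-counterexample argument ordered by real time. First I would locate every place in Algorithm~\ref{alg:H-BRB_3f+1-peer} where $\mset[s,h]$ is written: there are exactly two, Line~\ref{bo-msg-recv-msg-tho-source} (inside the $\msg$ handler) and Line~\ref{bo-fwd-recv-msg-tho-fwd} (inside the $\fwd$ handler). Assume for contradiction that some non-faulty node eventually adds some $m' \neq m$ to $\mset[s,h]$, and let $i$ be the \emph{first} non-faulty node, in real time, to do so, at some time $t$.

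If that insertion happens in the $\msg$ handler, then the guard on Line~\ref{bo-msg-first-msg} forces $i$ to have received $(\msg, s, m', h)$ from the source itself. Authenticated channels mean the message was genuinely sent by $s$, and since $s$ is non-faulty it only ever sent $(\msg, s, m, h)$ (Algorithm~\ref{alg:H-BRB_3f+1-source}, Line~\ref{bo-rb-source-send-msg}); hence $m' = m$, a contradiction. So the offending insertion must occur in the $\fwd$ handler, which means $i$ received $(\fwd, s, m', h)$ from some node $j$ to which it had previously sent $(\req, s, \hash(m'), h)$.

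Now I would trace that $\req$ backward. Writing $H := \hash(m')$, the only place any node emits a $\req$ carrying hash $H$ is the $\acc$ handler, triggered when $\ctr[\acc, s, H, h]$ reaches $f+1$; thus before time $t$ node $i$ had received $\acc$ messages tagged with $H$ from $f+1$ distinct nodes, at least one of which, say $p$, is non-faulty. A non-faulty node emits $(\acc, s, H, h)$ only inside \textsc{Check}$(\cdot, H, h)$, which first requires some $m_p \in \mset_p[s,h]$ with $\hash(m_p) = H$ (Line~\ref{ch-check-msg-in-set}). So $p$ had already inserted such an $m_p$ into its own MsgSet, and this happened strictly before $i$ received $p$'s $\acc$, hence strictly before $i$ sent its $\req$, hence strictly before $t$. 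Finally, collision resistance of $\hash$ turns $\hash(m_p) = H = \hash(m')$ into $m_p = m'$, and (again by collision resistance, since $m' \neq m$ gives $H = \hash(m') \neq \hash(m)$) we get $m_p \neq m$. Thus $p$ is a non-faulty node that inserted a message $\neq m$ into $\mset[s,h]$ strictly before $t$, contradicting the minimality of $i$. Therefore no such $m'$ exists, and $\mset_i[s,h] \subseteq \{m\}$ at every non-faulty $i$.

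The main obstacle I anticipate is the apparent circularity: inserting a bad message via $\fwd$ is gated by seeing $f+1$ $\acc$'s, but a non-faulty node's $\acc$ is itself gated by that node already holding a bad message in its MsgSet. The minimal-counterexample / induction-on-real-time framing is exactly what breaks the loop; what remains is careful bookkeeping of the event chain ($p$'s insert $\le$ $p$'s $\acc$-send $<$ $i$'s $\acc$-receipt $\le$ $i$'s $\req$-send $<$ $i$'s $\fwd$-receipt $= t$) together with one appeal to collision resistance to pass between a message and its hash.
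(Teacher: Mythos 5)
Your proof is correct. For the record, the paper does not actually spell out a proof of this lemma: it only remarks that it (together with the next one) ``follows directly from the reliable and authenticated channel assumption and the thresholds we used.'' Your argument is a faithful and complete expansion of exactly that remark, and it correctly identifies the one part that is genuinely non-trivial and that the paper glosses over: a non-faulty node can also populate $\mset[s,h]$ through the $\fwd$ handler, and ruling out a bad insertion there requires breaking the apparent circularity ($\fwd$ insertion is gated by $f+1$ $\acc$'s, while a non-faulty node's $\acc$ is gated by already holding a matching message in its own $\mset$). Your minimal-counterexample / induction-on-time framing, plus the observation that among $f+1$ distinct $\acc$ senders at least one is non-faulty, handles this cleanly; the two appeals to collision resistance are exactly where the paper's standing assumption on the hash function is consumed. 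One small refinement: since the formal model only guarantees a receive time $t'\ge t$ for a send at time $t$, ordering events by real time can in principle produce ties; phrasing the induction over the causal (happens-before) chain you already exhibit --- $p$'s insert precedes $p$'s $\acc$-send precedes $i$'s $\acc$-receipt precedes $i$'s $\req$-send precedes $i$'s $\fwd$-receipt --- makes the strict precedence, and hence the contradiction with minimality, unambiguous. This does not change the substance of your argument.
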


\begin{lemma}
If a non-faulty node $s$ never performs Reliable-Broadcast ($m, h$), then $\mset_i[s, h] = \emptyset $ at each non-faulty node $i$.

\end{lemma}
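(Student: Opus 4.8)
I would first fix the reading of the hypothesis. Since $s$ is non-faulty and the only place a non-faulty node emits a $\msg$ message is inside \textsc{Reliable-Broadcast} (Algorithm~\ref{alg:H-BRB_3f+1-source}), ``$s$ never performs \textsc{Reliable-Broadcast}$(\cdot, h)$'' means exactly that $s$ never sends any message of the form $(\msg, s, \cdot, h)$. Inspecting Algorithm~\ref{alg:H-BRB_3f+1-peer}, the set $\mset[s,h]$ is written to in exactly two handlers: \textsc{Receiving}$(\msg, s, m', h)$, guarded by $j = s$, and \textsc{Receiving}$(\fwd, s, m', h)$, guarded by ``$i$ has already sent $(\req, s, \hash(m'), h)$ to $j$''. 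So it suffices to show that neither write ever fires at a non-faulty node.

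The plan is a minimal-counterexample argument over real time. Even in an asynchronous system every step occurs at some real instant; order all steps of non-faulty nodes by these instants (breaking ties arbitrarily), and recall that a message is always received strictly after it is sent. Call a step of a non-faulty node \emph{offending} if it either (i) writes a new element into that node's copy of $\mset[s,h]$, or (ii) sends a message of the form $(\echo, s, \cdot, h)$, $(\acc, s, \cdot, h)$, or $(\req, s, \cdot, h)$. Assume, for contradiction, that offending steps exist, and let $E$, occurring at a non-faulty node $i$, be the earliest one; I would then rule out every case for $E$.

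If $E$ is inside \textsc{Receiving}$(\msg, s, m', h)$ --- necessarily the $\mset$-write, which is the handler's first offending action --- then the guard $j = s$ together with authenticated channels force $s$ itself to have sent $(\msg, s, m', h)$, contradicting the hypothesis. If $E$ is inside \textsc{Receiving}$(\fwd, s, m', h)$, its guard says $i$ has already sent $(\req, s, \hash(m'), h)$, which is a type-(ii) offending step strictly earlier than $E$, contradicting minimality. If $E$ sends an $\echo$ or $\acc$ tagged $s$ from a \textsc{Check} call, that call must be \textsc{Check}$(s, H, h)$ (the first argument is the source tag of the message it sends), which is vacuous unless some $m \in \mset_i[s,h]$ with $\hash(m) = H$; so $i$ performed an earlier type-(i) step, again contradicting minimality. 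The last case is $E$ sending a $\req$ from \textsc{Receiving}$(\acc, s, H, h)$: this line fires only when $\ctr_i[\acc, s, H, h] = f + 1$ (line~\ref{ch-acc-send-req-tho-f+1-acc}), i.e.\ $i$ has received $(\acc, s, H, h)$ from $f+1$ distinct senders, of which at most $f$ are faulty, so at least one non-faulty node sent $(\acc, s, H, h)$ --- and it did so before $i$ received that message and incremented the counter to $f+1$, hence strictly before $E$, once more contradicting minimality. Every case is impossible, so there is no offending step; in particular $\mset_i[s,h] = \emptyset$ at every non-faulty $i$.

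The main obstacle I anticipate is the shape of the dependency graph rather than any individual case: $\mset$-writes depend on $\fwd$ messages, $\fwd$ on $\req$, $\req$ on $\acc$, $\acc$ on $\echo$ or $\acc$ (via \textsc{Check}), and those on $\mset$-writes --- a cycle, so a direct ``trace the causal chain back to the source'' argument need not terminate. The earliest-offending-event framing is precisely what cuts the cycle. The one case that is not pure bookkeeping is the $\req$ case, and it is exactly where resilience is used: we need only ``at most $f$ faulty'', so $f+1$ senders of an $\acc$ necessarily include an honest one --- which is why the threshold on line~\ref{ch-acc-send-req-tho-f+1-acc} is $f+1$. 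This lemma then feeds Validity once one observes that \textsc{Reliable-Accept}$(s, m, h)$ in \textsc{Check} requires $m \in \mset_i[s,h]$.
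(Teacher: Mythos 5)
Your proof is correct. The paper itself offers no argument for this lemma: it simply asserts that it (together with Lemma~1) ``follows directly from the reliable and authenticated channel assumption and the thresholds we used.'' Your write-up is therefore not a different route so much as a rigorous filling-in of a step the authors wave through. The case analysis is complete: every write to $\mset_i[s,h]$ at a non-faulty node comes from the $\msg$ handler (blocked by authentication, since $s$ never sends a $\msg$ for index $h$) or the $\fwd$ handler (blocked by the prior-$\req$ guard), and every send of an $\echo$, $\acc$, or $\req$ tagged $(s,\cdot,h)$ by a non-faulty node is traced either to a prior $\mset$-write via the guard on line~\ref{ch-check-msg-in-set} of Algorithm~\ref{alg:H-BRB_3f+1-peer-check}, or, for the $\req$ case, to a prior non-faulty $\acc$-send via the $f+1$ threshold on line~\ref{ch-acc-send-req-tho-f+1-acc}. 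Your observation that the dependency structure is cyclic ($\acc \to \req \to \fwd \to \mset \to \acc$), so that a naive backward causal trace does not obviously terminate and one genuinely needs the earliest-offending-event (or equivalently, induction on real time) framing, is the one substantive idea here, and it is exactly the point the paper's ``follows directly'' glosses over. Two minor presentational notes: you correctly read the paper's \textsc{Check}$(j,H,h)$ call in the $\acc$ handler as only able to emit messages tagged with its first argument, which is the right way to handle what is likely a typo for \textsc{Check}$(s,H,h)$; and your closing remark that this lemma feeds Validity through the $\mset$-membership guard before \textsc{Reliable-Accept} matches how the paper uses it.
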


\begin{lemma}
\label{lemma:H-BRB[3f+1]-acc}
	If two non-faulty nodes $i$ and $j$ send ($\acc, s, \hash(m), h$) and ($\acc, s, \hash(m'), h$) messages, respectively, then $m = m'$.
	
\end{lemma}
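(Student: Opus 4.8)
The plan is to trace back what makes a non-faulty node emit an $\acc$ message and show that any two such messages must carry the same hash, after which collision resistance of $\hash$ delivers $m=m'$. Looking at Algorithm~\ref{alg:H-BRB_3f+1-peer-check}, a non-faulty node sends $(\acc,s,H,h)$ only through the \textsc{Check} routine, and only under one of two guards: either it has seen $\ctr[\echo,s,H,h] \ge n-f$ (Line~\ref{ch-check-send-acc-tho-n-f-echo}), or it has seen $\ctr[\echo,s,H,h]\ge f+1$ together with the escalation chain, or it has seen $\ctr[\acc,s,H,h]\ge f+1$ (Line~\ref{ch-check-send-acc-tho-f+1-acc}). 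Moreover, in every branch the outer guard (Line~\ref{ch-check-msg-in-set}) forces the node to actually hold some $m\in\mset[s,h]$ with $\hash(m)=H$. So I would first argue that \emph{some} non-faulty node must have sent an $\acc$ based on the $\echo$ threshold (not merely relayed one based on $f+1$ $\acc$'s), by taking, among all non-faulty senders of $\acc$ messages, one that acted earliest in some linearization; that node cannot have relied on $f+1$ $\acc$'s (those would have to include a non-faulty one that acted strictly earlier), so it relied on $n-f$ distinct $\echo$ messages for its hash value, say $H$.

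Next I would run the standard quorum-intersection argument. Suppose non-faulty $i$ ultimately sends $(\acc,s,H,h)$ and non-faulty $j$ ultimately sends $(\acc,s,H',h)$. By the previous paragraph, the \emph{first} non-faulty node to vouch for $H$ collected $n-f$ $\echo$ messages for $H$, and likewise the first non-faulty node to vouch for $H'$ collected $n-f$ $\echo$ messages for $H'$. Since $n\ge 3f+1$, two sets of size $n-f$ intersect in at least $n-2f\ge f+1$ nodes, hence in at least one non-faulty node $k$. But a non-faulty node sends at most one $\echo$ message for index $h$ (the ``never sent $(\echo,s,*,h)$'' guards in Algorithms~\ref{alg:H-BRB_3f+1-peer} and~\ref{alg:H-BRB_3f+1-peer-check}), so $k$'s single $\echo$ carries one hash value; therefore $H=H'$. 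Finally, $i$ holds $m$ with $\hash(m)=H$ and $j$ holds $m'$ with $\hash(m')=H'=H$, and since $m,m'$ are both messages actually placed in a $\mset$ by a non-faulty node (via a genuine $\msg$ or $\fwd$), collision resistance of $\hash$ gives $m=m'$.

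The main obstacle is the bookkeeping in the first paragraph: I must rule out a ``pure relay cycle'' in which non-faulty nodes send $\acc$ for a value $H$ only on the strength of $f+1$ earlier $\acc$'s, none of which is ultimately grounded in an $\echo$ quorum. This is handled by a minimal-counterexample / earliest-action argument: among all non-faulty $\acc$-senders for index $h$, pick one whose send event is first in a fixed linearization of the (causally ordered) execution; a relay trigger needs $f+1$ distinct $\acc$ senders, at least one non-faulty, contradicting minimality, so the earliest non-faulty $\acc$-sender used the $\echo$ route. A secondary subtlety is that Line~\ref{ch-check-send-acc-tho-n-f-echo} vs.\ the $f+1$-$\echo$ escalation both route through $\echo$ counters, so both cases still reduce to ``this node saw $\ge f+1$ distinct $\echo$'s for $H$'' — and for the intersection bound I only need the \emph{first} non-faulty vouch to rest on $n-f$ $\echo$'s, which the argument above secures. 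I would also note in passing that faulty nodes' $\echo$ or $\acc$ messages cannot manufacture a second grounded hash, because any grounded hash needs $\ge n-f$ $\echo$'s, of which $\le f$ can be faulty, leaving a non-faulty majority that pins down the value.
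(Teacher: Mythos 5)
Your proof is correct and follows essentially the same route as the paper's: take the earliest non-faulty sender of an $\acc$ message for each hash, observe that it must have crossed the $n-f$ $\echo$ threshold rather than the $f+1$ $\acc$ relay rule, and then apply quorum intersection together with the fact that a non-faulty node sends at most one $\echo$ per $(s,h)$. Your explicit treatment of the relay-cycle issue and the closing appeal to collision resistance of $\hash$ are just more careful spellings of steps the paper leaves implicit.
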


\begin{proof}
	Suppose, for the purpose of contradiction, $m \neq m'$. WLOG, let $i$ be the first node that sends ($\acc, s, \hash(m), h$), and let $j$ be the first node that sends ($\acc, s, H(m'), h$).
	Note that by construction, they are only able to send an $\acc$ message when either line \ref{ch-check-send-acc-tho-n-f-echo} or line \ref{ch-check-send-acc-tho-f+1-acc} of Algorithm \ref{alg:H-BRB_3f+1-peer-check} is satisfied.
	Since we assume $i$ is the first node that sends $\acc$ message in support of $m$, line \ref{ch-check-send-acc-tho-f+1-acc} of Algorithm \ref{alg:H-BRB_3f+1-peer-check} could not be satisfied. Therefore, $i$ must have received at least $ n - f~\echo$ messages supporting $\hash(m)$, out of which $n-f-f \geq f + 1~\echo$ messages are from non-faulty nodes. Thus, at most $f$ faulty nodes and  at most $f$ non-faulty nodes would send $\hash(m')$
	
	Now consider the case of node $j$. Since $j$ has received at least $n - f$ $\echo$ messages supporting $\hash(m')$, following the same rationale as above, at least $f + 1$ $\echo$ messages are from non-faulty nodes. This however leads to a contradiction, since the algorithm does not permit non-faulty nodes to 
	send $\echo$ messages supporting both $\hash(m)$ and $\hash(m')$.
\end{proof}

Property 1-4 follow directly from the three lemmas above. Below, we prove the most interesting one, Property 5 (Eventual Termination).

\begin{lemma}
    H-BRB[3f+1] satisfies Property 5 (Eventual Termination) if $n \geq 3f+1$.
\end{lemma}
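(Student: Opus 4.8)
The plan is to show that if one non-faulty node $p$ performs $\textsc{Reliable-Accept}(s,m,h)$, then every non-faulty node eventually does the same. By the logic of Algorithm~\ref{alg:H-BRB_3f+1-peer-check}, line~\ref{ch-check-ra}, node $p$ reached acceptance only after observing $\ctr[\acc,s,\hash(m),h] \geq n-f$, i.e.\ at least $n-f$ distinct nodes sent $(\acc,s,\hash(m),h)$. Of these, at least $n-2f \geq f+1$ are non-faulty (using $n\geq 3f+1$). First I would argue that these $f+1$ non-faulty $\acc$-senders actually send their $\acc$ messages \emph{to all} nodes, so every non-faulty node eventually receives at least $f+1$ copies of $(\acc,s,\hash(m),h)$ from distinct senders. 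The key point here is that a non-faulty node sending $\acc$ does so via Line~\ref{ch-check-send-acc-tho-n-f-echo} or Line~\ref{ch-check-send-acc-tho-f+1-acc}, both of which broadcast to all nodes, and the reliable-channel assumption guarantees delivery to every non-faulty receiver.

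Next I would show this $f+1$-threshold of $\acc$ messages propagates: once any non-faulty node $q$ has received $f+1$ copies of $(\acc,s,\hash(m),h)$, the check at Line~\ref{ch-check-send-acc-tho-f+1-acc} fires, so $q$ itself sends $(\acc,s,\hash(m),h)$ to all nodes (if it has not already — and by Lemma~\ref{lemma:H-BRB[3f+1]-acc}, any $\acc$ it previously sent must have been for the same $m$). Hence \emph{every} non-faulty node — all $\geq n-f$ of them — eventually sends $(\acc,s,\hash(m),h)$. Therefore every non-faulty node eventually receives $\geq n-f$ $\acc$ messages supporting $\hash(m)$, satisfying the counter condition at Line~\ref{ch-check-ra}.

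The remaining obstacle — and I expect this to be the crux — is the guard at Line~\ref{ch-check-msg-in-set}: to actually execute $\textsc{Reliable-Accept}(s,m,h)$, a non-faulty node needs $m$ itself in $\mset[s,h]$, not merely $\hash(m)$. I would handle this with the $\req/\fwd$ recovery mechanism: when a non-faulty node $q$ accumulates $f+1$ $\acc$ messages for $\hash(m)$ but has no preimage in $\mset[s,h]$, it sends $(\req,s,\hash(m),h)$ to those $f+1$ senders (Line~\ref{ch-acc-send-req-tho-f+1-acc} of Algorithm~\ref{alg:H-BRB_3f+1-peer}). Among those $f+1$ senders at least one, say $r$, is non-faulty; I must argue that $r$ has $m\in\mset_r[s,h]$. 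This requires tracing why a non-faulty node sends $\acc$ for $\hash(m)$ only after it possesses $m$: indeed the \textsc{Check} function is only ever invoked with a hash for which the caller holds a preimage, or is entered through a path (Line~\ref{ch-echo-fist-echo} via $\echo$) that — combined with the guard at Line~\ref{ch-check-msg-in-set} — blocks the $\acc$-send unless $m\in\mset$. So $r$ replies with $(\fwd,s,m,h)$, $q$ receives it (reliable channel), inserts $m$ into $\mset_q[s,h]$ at Line~\ref{bo-fwd-recv-msg-tho-fwd}, and re-runs \textsc{Check}; now both the preimage guard and the $n-f$ $\acc$-count hold, so $q$ accepts. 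Combining: every non-faulty node obtains the preimage and reaches the $n-f$ threshold, hence performs $\textsc{Reliable-Accept}(s,m,h)$, which is Property~5. I would close by noting the one subtlety to check carefully — that a non-faulty node's $\req$ is sent to the correct $f+1$ nodes and that the "first $(\req,s,h)$ from $j$" guard at Line~\ref{bo-req-first-req} on the responder side does not starve a legitimate requester (it does not, since each non-faulty requester sends at most one such $\req$ per $(s,h)$, matching the guard).
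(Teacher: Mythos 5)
Your proposal is correct and follows essentially the same route as the paper's proof: the $n-f$ $\acc$ threshold at the accepting node yields $f+1$ non-faulty $\acc$ senders, whose messages reach every non-faulty node and trigger both the $\req$/$\fwd$ recovery of $m$ and the $f+1$-$\acc$ amplification rule, so that eventually every non-faulty node holds $m$ together with $n-f$ matching $\acc$ messages and accepts. You additionally make explicit a detail the paper leaves implicit, namely that any non-faulty $\acc$-sender necessarily holds the preimage $m$ (via the guard at line~\ref{ch-check-msg-in-set}) and can therefore answer a $\req$.
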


\begin{proof}
	If a non-faulty node $i$ reliably accepts a message $m$, then predicates at line~\ref{ch-check-msg-in-set} and at line~\ref{ch-check-ra} of Algorithm~\ref{alg:H-BRB_3f+1-peer-check} are satisfied, which means $i$ has the message content $m$ (either directly from the source or a forwarded message) and $i$ has gathered at least $ n - f$ $\acc$ messages. Out of these $\acc$ messages, at most $f$ come from Byzantine nodes, and thus node $i$ has received at least $ n-f-f \geq f + 1$, $\acc$ messages from non-faulty nodes.
	
	By our assumption that all messages sent by non faulty nodes eventually reach non faulty nodes, at least $f + 1$ $\acc$ messages will eventually be delivered at all the other non-faulty nodes, and thus a non-faulty node can always receive some message broadcast by (possibly faulty) node $s$ since line~\ref{ch-acc-send-req-tho-f+1-acc} of Algorithm~\ref{alg:H-BRB_3f+1-peer} will be triggered.

	Line \ref{ch-check-msg-in-set} of Algorithm \ref{alg:H-BRB_3f+1-peer-check} is satisfied once a non-faulty node acquires the original message and 
	line \ref{ch-check-send-acc-tho-f+1-acc} of Algorithm \ref{alg:H-BRB_3f+1-peer-check} is satisfied when a node has $f + 1$, $\acc$ messages. Once this happens, by Lemma~\ref{lemma:H-BRB[3f+1]-acc}, each of the $n-f$ non-faulty nodes will send out the correct $\acc(\hash(m))$ message 
	to all the other nodes.  Thus, eventually $\geq n - f$ $\acc$ messages supporting $m$ will eventually be delivered to all the other non-faulty nodes, and the predicate on line \ref{ch-check-ra} of Algorithm \ref{alg:H-BRB_3f+1-peer-check} will be satisfied, and $m$ will be reliably delivered.  
\end{proof}

\subsection{H-BRB[5f+1]}


Inspired by a recent paper \cite{imbs2016trading} that sacrifices resilience for lower message and round complexity, we adapt H-BRB[3f+1] in a similar way. Particularly, we can get rid of the $\acc$ messages. By increasing number of servers, we are able to guarantee that after receiving $\geq n-f$, $ \echo$ messages, a node can reliably accept a message if $n\ge 5f+1$. Intuitively speaking, this guarantees that at least $ n-2f \geq 3f+1$ non-faulty nodes have received the same message (or more precisely the same $\hash(m)$), which is guaranteed to be a quorum that prevents other nodes to collect enough $\echo$ messages.



The pseudo-code of H-BRB[5f+1] is presented in Algorithm \ref{alg:H-BRB_3f+1-source}, \ref{alg:H-BRB_5f+1-peer}, and \ref{alg:H-BRB_5f+1-peer-check}. Note that the source code is the same as H-BRB[3f+1]. Correctness proof is similar to the ones in \cite{DBLP:journals/corr/ImbsR15}.

\begin{algorithm}[t]
\caption{H-BRB[5f+1]: all node $i$ (including $s$) when receiving a message from node $j$}
\label{alg:H-BRB_5f+1-peer}
\begin{algorithmic}[1]

\Function{Receiving}{$\msg, s, m, h$}
    \If{$j = s$\AND first $(\msg, s, *, h)$}              
        \State $\mset[s, h] \gets \mset[s, h] \cup \{m\}$      
        \State $\ctr[\echo, s, \hash(m), h] ++$
        \If{never sent $(\echo, s, *, h)$} 
            \State {\sc Send}  ($\echo, s, \hash(m), h$) to all nodes
        \EndIf
        
    \EndIf
\EndFunction

\Function{Receiving}{$\echo, s, H, h$}
    \If{first ($\echo, s, *, h$) from $j$}  
        \State $\ctr[\echo, s, H, h] ++$
        \If{$\ctr[\echo, s, H, h] = f+1$}
            \If{$\not\exists m \in \mset[s,  h]$  s.t. $\hash(m) = H$}
                \State {\sc Send}  ($\req, s, \hash(m), h$) to these $f+1$ nodes
            \EndIf
        \EndIf
        \State \textsc{Check}($s, H, h$)
    \EndIf
\EndFunction

\Function{Receiving}{$\req, s, H, h$}
    \If{first ($\req, s, h$) from $j$}                     
        \If{$\exists m' \in \mset[s, h]$ s.t. $\hash(m') = H$}
            \State {\sc Send}  ($\fwd, s, m', h$) to $j$
        \EndIf
    \EndIf
\EndFunction

\Function{Receiving}{$\fwd, s, m, h$}
    \If{have sent ($\req, s, \hash(m), h$) to $j$}         
        \If{first ($\fwd, s, m, h$) from $j$}              
            \State $\mset[s, h] \gets \mset[s, h] \cup \{ m \}$ 
            \State \textsc{Check}($s, \hash(m), h$)
        \EndIf
    \EndIf
\EndFunction

\end{algorithmic}
\end{algorithm}

\begin{algorithm}
\caption{H-BRB[5f+1]: helper function for all node $i$ (including $s$)}
\label{alg:H-BRB_5f+1-peer-check}
\begin{algorithmic}[1]

\Function{Check}{$s, H, h$}
    \If{$m \in \mset[s,  h]$  s.t. $\hash(m) = H$}        
        \If{$\ctr[\echo, s, H, h] \geq n - 2f$} 
            \If{never sent ($\echo, s, *, h$)}
                \State {\sc Send}  ($\echo, s, H, h$) to all nodes
            \EndIf
        \EndIf
        
        \If{$\ctr[\echo, s, H, h] \geq n - f$}
            \State \textsc{Reliable-Accept}($s, m, h$)
        \EndIf
    \EndIf
\EndFunction

\end{algorithmic}
\end{algorithm}

Our H-BRB[5f+1] protocol completes in $2$ rounds of communication among the  nodes which is one round of communication less compared to other protocols as proved in Theorem~\ref{theorem:2_round_impossibility} in Section~\ref{section:impossibility}. 
\section{EC-based Byzantine RB}
\label{s:ec-BRB}

One drawback of the H-BRB is that the bit complexity or message size is still high. Especially, the source still needs to send $O(nL)$ bits.
One standard trick is to use erasure coding \cite{verapless_book} to reduce the message size. We present two ideas in this section. The key difference between our protocols and prior EC-based RB protocol \cite{Patra11} and our hash-based protocols is that these algorithms require the source to send its original message to all other nodes, whereas in our EC-based protocols, the source sends a small coded element.

\subsection{MDS Erasure Code: Preliminaries}
\label{s:MDS}

For completeness, we first discuss basic concepts and notations from coding theory.
We use a linear $[n, k]$  MDS (Maximum Distance Separable) erasure code~\cite{verapless_book} over a finite field $\mathbb{F}_q$ to encode the message $m$. An $[n, k]$ MDS erasure code has the property that any $k$ out of the $n$ coded elements, computed by encoding $m$, can be used to recover (decode) the original message $m$. 

For encoding, $m$ is divided into $k$ elements $m_1, m_2, \ldots, m_k$ with each element having  size $L/k$ 
(assuming size of $m$ is $L$). The encoder takes the $k$ elements as input and produces $n$ coded elements $c_1, c_2, \ldots, c_n$ as output, 
 i.e., 
 
 \[
 [c_1, \ldots, c_n] = ENC([m_1, \ldots, m_k]),
 \]
 
\noindent where $ENC$ denotes the encoder. For brevity, we simply use $ENC(m)$ to represent $[c_1, \ldots, c_n]$.
 
 The vector $[c_1, \ldots, c_n]$ is  referred to as the \textit{codeword} corresponding to the message $m$. Each 
 coded element $c_i$ also has  size $\frac{L}{k}$. 
 
 In our algorithm, the source disseminates one coded element to each node. We use $ENC_i$ to denote the projection of $ENC$ on to the $i^{\text{th}}$ output component, i.e., $c_i = ENC_i(v)$. Without loss of generality, we associate 
 the coded element $c_i$ with node $i$, $1 \leq i \leq n$.
 




\subsection{EC-BRB[3f+1]}

Our first idea is to adapt H-BRB[3f+1] so that each node $i$ not only forwards $\hash(m)$, but also a coded element $c_i$. 
This reduces bit complexity.
We use $[n, f+1]$ MDS erasure code, and do not use detection or correction capability.
In other words, the decoder function $DEC$ can correctly decode the original message if the input contains at least $f+1$ \textit{uncorrupted} coded elements. We do not need the correction/detection, because a node can use $\hash(m)$ to verify whether the decoded message is the intended one or not.


The pseudo-code of EC-BRB[3f+1] is presented in Algorithm \ref{alg:EC-BRB_3f+1-source}, \ref{alg:EC-BRB_3f+1-peer}, and \ref{alg:EC-BRB_3f+1-peer-check}. Note that Line \ref{line:computation} in Algorithm \ref{alg:EC-BRB_3f+1-peer} requires exponential computation.
The proof is similar to the ones for H-BRB[3f+1].

\begin{algorithm}[htbp!]
\caption{
EC-BRB[3f+1]: source $s$ with message $m$ of index $h$}
\label{alg:EC-BRB_3f+1-source}
\begin{algorithmic}[1]
\Function{Reliable-Broadcast}{$m, h$}
    
    \State $\{c_1, c_2, \dots, c_n\} = $ ENC$(m)$
    \State {\sc Send}($\msg, s, \hash(m), c_k, h$) to node $k$ 
\EndFunction
\end{algorithmic}
\end{algorithm}

\begin{algorithm}[htpb!]
\caption{EC-BRB[3f+1]: all node $i$ (including $s$) when receiving a message from node $j$}
\label{alg:EC-BRB_3f+1-peer}
\begin{algorithmic}[1]

\Function{Receiving}{$\msg, s, H, c, h$}
    \If{$j = s$\AND first $(\msg, s, *, *, h)$}
        \State $\cset[s, H, h] \gets  \cset[s, h, H] \cup \{c\}$ 
        \State $\ctr[\echo, s, H, h]++$
        \If{never sent $(\echo, s, *, h)$}
            \State {\sc Send}$(\echo, s, H, c, h)$ to all nodes
        \EndIf
    \EndIf
\EndFunction

\Function{Receiving}{$\echo, s, H, c, h$}
    \If{first ($\echo, s, *, *, h$) from $j$}  
        \State $\ctr[\echo, s, H, h] ++$
        \State $\cset[s, H, h] \gets  \cset[s, h, H] \cup \{c\}$ 
        \If{$\ctr[\echo, s, H, h] \geq f+1$}
            \If{$\not\exists m \in \mset[s,  h]$  s.t. $\hash(m) = H$}
                \For{each $C \subseteq \cset[s, H, j]$, $|C| = f+1$}\label{line:computation}
                    \State $m \gets DEC(C)$
                    \If{$\hash(m) = H$}
                        \State $\mset[$s, h$] \gets \mset[$s, h$] \cup \{m\}$
                    \EndIf
                \EndFor
            \EndIf
        \EndIf
        \State \textsc{Check}($s, H, h$)
    \EndIf
\EndFunction

\Function{Receiving}{$\acc, s, H, h$}
    \If{first ($\acc, s, *, h$) from $j$}            
        \State $\ctr[\acc, s, H, h] ++$
        \If{$\ctr[\acc, s, H, h] \geq f + 1$}      
            \If{$\not \exists m' \in \mset[s, h]$ s.t. $\hash(m') = H$}
                \State {\sc Send}  ($\req, s, H, h$) to nodes if have not sent  ($\req, s, H, h$) to them before
            \EndIf
        \EndIf
        \State \textsc{Check}($j, H, h$)
    \EndIf
\EndFunction

\Function{Receiving}{$\req, s, H, h$}
    \If{first ($\req, s, h$) from $j$}                     \label{bo-req-first-req}
        \If{$\exists m' \in \mset[s, h]$ s.t. $\hash(m') = H$}
            \State {\sc Send}  ($\fwd, s, m', h$) to $j$
        \EndIf
    \EndIf
\EndFunction

\Function{Receiving}{$\fwd, s, m, h$}
    \If{have sent ($\req, s, \hash(m), h$) to $j$}         \label{bo-fwd-have-send-req}
        \If{first ($\fwd, s, m, h$) from $j$}              \label{bo-fwd-first-fwd}
            \State $\mset[s, h] \gets \mset[s, h] \cup \{ m \}$ \label{bo-fwd-recv-msg-tho-fwd}
            \State \textsc{Check}($s, \hash(m), h$)
        \EndIf
    \EndIf
\EndFunction

\end{algorithmic}
\end{algorithm}

\begin{algorithm}[htpb!]
\caption{EC-BRB[3f+1]: helper function for all node $i$ (including $s$)}
\label{alg:EC-BRB_3f+1-peer-check}
\begin{algorithmic}[1]

\Function{Check}{$s, H, h$}
    \If{$m \in \mset[s,  h]$  s.t. $\hash(m) = H$}                        
        \If{$\ctr[\echo, s, H, h] \geq f + 1$}
            \If{never sent ($\echo, s, *, *, h$)}
                \State $\{c_1, \dots, c_n\} \gets ENC(m)$
                \State {\sc Send}  ($\echo, s, H, c_i, h$) to all nodes
            \EndIf
        \EndIf
        \If{$\ctr[\echo, s, H, h] \geq n - f$}   \label{ch-check-send-acc-tho-n-f-echo}
            \If{never sent ($\acc, s, *, h$)}
                \State {\sc Send}  ($\acc, s, H, h$) to all nodes
            \EndIf
        \EndIf
        \If{$\ctr[\acc, s, H, h] \geq f + 1$}    \label{ch-check-send-acc-tho-f+1-acc}
            \If{never sent ($\acc, s, *, h$)}
                \State {\sc Send}  ($\acc, s, H, h$) to all nodes
            \EndIf
        \EndIf
        \If{$\ctr[\acc, s, H, h] \geq n - f$}    \label{ch-check-ra}
            \State \textsc{Reliable-Accept}($s, m, h$)
        \EndIf
    \EndIf
\EndFunction

\end{algorithmic}
\end{algorithm}

The downside is that EC-BRB[3f+1] requires exponential computation. That is, it needs to find out the correct $f+1$ coded elements to decode, which requires $O(\binom{n}{f+1})$ computation. When $f$ is small, the computation is negligible. However, the scalability is limited.

\subsection{EC-BRB[4f+1]}

To fix the scalability issue, we rely on the correction capability of MDS code. Unfortunately, we have to sacrifice the resilience, and the algorithm only works when $n \geq 4f+1$. This trade-off turns out is necessary, as formally discussed in Section \ref{section:impossibility}.

\subsubsection*{Error-correcting MDS Codes}
In our setup,  we will use $[n, k]$ MDS code for 
$$k = n-3f$$
\noindent In other words, the distance between different codewords is $d = n-k+1 = 3f+1$. Our algorithm will rely on the following theorems from coding theory.

\begin{theorem}
\label{thm:coding}
The decoder function $DEC$ can correctly decode the original message if the input contains at least $n-f$ coded elements and among these used elements, up to $f$ may be erroneous.

    
\end{theorem}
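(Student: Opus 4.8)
The plan is to reduce the statement to the classical error-and-erasure correction bound for a code of known minimum distance. First I would record the parameters of the code in use: since EC-BRB[4f+1] encodes with an $[n,k]$ MDS code for $k = n - 3f$, the minimum distance is $d = n - k + 1 = 3f+1$. Receiving ``at least $n-f$ coded elements'' means at most $f$ of the $n$ codeword coordinates are absent; I would model each absent coordinate as an \emph{erasure}, so the erasure count is $\rho \le f$, and by hypothesis the number of \emph{errors} among the present coordinates is $e \le f$. The decoder knows which coordinates are erased (it knows which nodes supplied an element) but not which of the supplied elements are corrupted, so this is genuinely an error-and-erasure decoding instance, not pure erasure decoding.

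Next I would invoke the standard fact that a code of minimum distance $d$ can uniquely recover the transmitted codeword from any pattern of $e$ errors together with $\rho$ erasures whenever $2e + \rho \le d - 1$. For completeness I would sketch the proof via puncturing: delete the $\rho$ erased coordinates; for an MDS code the punctured code is again MDS of length $n - \rho$, dimension $k$, and minimum distance $d - \rho$, provided $n - \rho \ge k$ — which holds here because $n - \rho \ge n - f \ge n - 3f = k$. A code of minimum distance $d - \rho$ corrects up to $\lfloor (d - \rho - 1)/2 \rfloor$ errors, so the word restricted to the non-erased coordinates is decoded correctly as soon as $e \le (d-\rho-1)/2$, i.e. $2e + \rho \le d - 1$; and since the resulting code is MDS with at least $k$ coordinates, the $k$ message symbols (hence $m$) are recovered from the corrected non-erased coordinates.

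Finally I would plug in the numbers: $2e + \rho \le 2f + f = 3f = d - 1$, so the hypothesis of the correction bound is satisfied (with equality), and therefore $DEC$ returns the original message $m$. I would add the uniqueness remark that actually matters for the protocol: no codeword other than the true one can agree with the received word outside the $\rho$ erasures up to $e$ errors, since two such codewords would differ in at most $2e + \rho \le d - 1 < d$ positions, contradicting the minimum distance; this is why a node that later verifies the decoded value against $\hash(m)$ is safe. The only mild subtlety — the ``main obstacle,'' such as it is — is checking that puncturing preserves the MDS property and that at least $k$ clean coordinates remain after handling erasures; both follow immediately from $f \ge 0$ and the choice $k = n - 3f$, so beyond citing the right coding-theory fact the argument is routine.
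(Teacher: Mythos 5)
Your argument is correct, and it is exactly the standard error-and-erasure justification that the paper implicitly relies on: the paper states Theorem~2 without proof, presenting it as a known fact from coding theory for the $[n,\,n-3f]$ MDS code, so there is no in-paper proof to compare against. Your bookkeeping checks out --- with $d = 3f+1$, $\rho \le f$ erasures and $e \le f$ errors give $2e+\rho \le 3f = d-1$, meeting the error-and-erasure bound with equality --- and your observation that the decoder genuinely knows the erasure positions (each coded element is tied to a node index) is the one modeling point worth making explicit. The closing uniqueness remark is a nice bonus, since it is really Theorem~3 of the paper (the statement the protocol's agreement argument leans on) in embryonic form.
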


\begin{theorem}
\label{thm:coding2}
Assume $n\geq 4f+1$. Consider codeword $C = $  \\$\{c_1, c_2, \dots, c_n\}$ and codeword $C' = \{c_1', c_2', \dots, c_n'\}$ such that (i) $C$ has at most $f$ erasures, (ii) $C'$ has at most $f$ erasures,\footnote{Erasures at $C$ and $C'$ may occur at different positions.} and (iii) at most $f$ of the remaining coded elements are different between the two codewords. If $DEC(C) = m$, then $DEC(C')$ either returns $m$ or detects an error.
\end{theorem}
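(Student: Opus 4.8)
The plan is to reduce everything to the standard minimum-distance/error-correction bound for the $[n,k]$ MDS code with $k = n-3f$, whose minimum distance is $d = 3f+1$. First I would fix the position sets: let $E$ and $E'$ denote the erased positions of $C$ and $C'$ respectively, with $|E|, |E'| \le f$, and let $D$ be the set of positions at which $C$ and $C'$ both have a coded element but those elements differ, with $|D| \le f$. Outside $E \cup E' \cup D$ the two (partial) codewords agree. The key observation to establish is that the partial vector $C'$ can be viewed as a corrupted copy of the full codeword $\mathbf{c}$ with $DEC(\mathbf{c}) = m$: treating $C'$ as input to $DEC$, it has at most $|E'| \le f$ erasures, and the non-erased coordinates of $C'$ differ from $\mathbf{c}$ in at most the positions of $D$ together with the positions of $E$ (a coordinate erased in $C$ but present in $C'$ could carry an arbitrary value), i.e.\ in at most $|D| + |E| \le 2f$ coordinates.

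Next I would invoke the decoding guarantee: an $[n,k]$ MDS code with distance $d = 3f+1$, when presented with a word having $\rho$ erasures and $\tau$ errors among the remaining symbols, decodes correctly to the unique nearby codeword provided $2\tau + \rho \le d - 1 = 3f$, and otherwise the standard bounded-distance decoder either outputs the unique codeword within the decoding sphere or declares failure (detects an error) — it never outputs a \emph{different} valid message, because two distinct codewords restricted to the $n - \rho$ unerased coordinates still differ in at least $d - \rho \ge 3f+1 - f = 2f+1 > 2\tau$ positions, so at most one codeword lies within Hamming distance $\tau$. Here $\rho = |E'| \le f$ and $\tau \le |D| + |E| \le 2f$, hence $2\tau + \rho \le 5f$, which is \emph{not} always $\le 3f$; so I cannot always conclude correct decoding, but I \emph{can} conclude the dichotomy ``returns $m$ or detects an error'' as long as no \emph{other} codeword $\mathbf{c}''$ with $DEC(\mathbf{c}'') = m'' \ne m$ is consistent with $C'$ in the decoder's sense. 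That is exactly the uniqueness argument above: any such $\mathbf{c}''$ would have to agree with $C'$ on enough coordinates that it agrees with $\mathbf{c}$ on $\ge (n-\rho) - \tau - \tau' $ coordinates for the decoder to be fooled, and I would make the counting precise to show $\mathbf{c}'' = \mathbf{c}$, contradiction. Concretely: on the $n - |E| - |E'| - |D| \ge n - 3f = k$ coordinates where $C$ and $C'$ both have symbols and agree, $\mathbf{c}$ and $\mathbf{c}''$ would both have to match the decoder's accepted output on $\ge k$ positions minus the few the decoder is allowed to ``correct'', and since $k$ symbols determine an MDS codeword uniquely, $\mathbf{c}'' = \mathbf{c}$.

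The main obstacle I anticipate is pinning down precisely which decoder semantics Theorem~\ref{thm:coding2} intends — in particular whether $DEC$ on an input with more than $f$ effective errors is guaranteed to ``detect an error'' rather than silently return garbage that happens not to equal $m$. I would resolve this by appealing to Theorem~\ref{thm:coding}: that theorem already asserts $DEC$ corrects up to $f$ errors among $\ge n-f$ symbols, which forces $d \ge 2f+1$ worth of slack beyond erasures; the cleanest route is to define/assume $DEC$ is a bounded-distance decoder with decoding radius exactly $f$ (consistent with Theorem~\ref{thm:coding}) and then the ``returns $m$ or detects an error'' conclusion is immediate from the sphere-packing/uniqueness bound, since the true message $m$ is the only message whose codeword could possibly lie within distance $f$ of the presented word $C'$ after accounting for its $\le f$ erasures — any competitor would have to coincide with $\mathbf{c}$ on $\ge k$ coordinates and hence equal it. So the proof skeleton is: (1) set up $E, E', D$; (2) bound the effective erasure/error counts of $C'$ relative to $\mathbf{c}$; (3) show via the MDS distance $d = 3f+1$ and $|E \cup E' \cup D| \le 3f$ that on the $\ge k$ common-agreeing coordinates $\mathbf{c}$ is the unique codeword, so no wrong message is within the decoder's reach; (4) conclude the decoder either lands on $m$ or reports failure. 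I expect step (3) — the careful coordinate accounting ensuring $\ge k$ ``clean'' agreeing positions — to carry the real content, and it goes through exactly because $|E| + |E'| + |D| \le 3f = n - k$.
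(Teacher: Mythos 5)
You should first know that the paper contains no proof of Theorem~\ref{thm:coding2}: it is asserted as a fact from coding theory, accompanied only by a remark on why $n\le 4f$ fails, so your attempt has to stand entirely on its own. It does not, and the gap is precisely where you suspected, in step (3). Let $F$ be the set of non-erased positions where $C$ disagrees with the codeword $\mathbf{c}=ENC(m)$ that $DEC(C)$ returns, and $F''$ the analogous set for $C'$ and a hypothetical $\mathbf{c}''=ENC(m'')$ returned by $DEC(C')$. The coordinates on which you can actually force $\mathbf{c}$ and $\mathbf{c}''$ to agree are those outside $E\cup E'\cup D\cup F\cup F''$, of which there are only at least $n-3f-|F|-|F''|=k-|F|-|F''|$; the moment either decoder corrects even one symbol you are below the $k$ coordinates needed to conclude $\mathbf{c}''=\mathbf{c}$, and the distance bound also gives nothing (distinct codewords need only differ in $2f+1$ of the unerased positions of $C'$, while $\mathbf{c}$ and $\mathbf{c}''$ are each permitted to sit within total distance up to $2f+|F|+|F''|$ of it). Your intermediate claim that $C'$ differs from $\mathbf{c}$ in at most $|E|+|D|\le 2f$ positions silently drops the $|F|$ term for the same reason: positions where $C$ and $C'$ agree may both carry a symbol that $DEC(C)$ corrected away.

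No bookkeeping can close this, because the statement is false for any decoder meeting the guarantee of Theorem~\ref{thm:coding}. Take $f=1$, $n=5$, $k=2$, $d=4$, and codewords $\mathbf{c}=ENC(m)$, $\mathbf{c}''=ENC(m'')$ at distance exactly $4$, agreeing only at position $1$. Let $C=(\mathbf{c}_1,\perp,\mathbf{c}_3,\mathbf{c}_4,\mathbf{c}''_5)$ and $C'=(\mathbf{c}''_1,\mathbf{c}''_2,\perp,\mathbf{c}''_4,\mathbf{c}''_5)$. Each has one erasure, and on the common positions $\{1,4,5\}$ they differ only at position $4$, so hypotheses (i)--(iii) hold. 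Yet $C$ is $\mathbf{c}$ with one erasure and one error, and $C'$ is $\mathbf{c}''$ with one erasure and no errors, so Theorem~\ref{thm:coding} forces $DEC(C)=m$ and $DEC(C')=m''\ne m$ with no error detected; the scenario is realizable by a Byzantine source colluding with a single Byzantine relay at position~$4$. So you should not try to complete this proof as stated: either the hypotheses must be strengthened (e.g., $C$ must be error-free relative to $ENC(m)$, or the code must have larger distance), or --- as the algorithm itself suggests --- agreement in EC-BRB[4f+1] must be argued from the reliably broadcast hash check rather than from this coding-theoretic claim.
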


Note that Theorem \ref{thm:coding2} does not work for $n \leq 4f$. This is because by construction, each pair of codewords has distance $3f+1$. Therefore if the source is faulty, it is possible to find a scenario that $DEC(C) = m$ and $DEC(C') = m'$ for $m' \neq m$ if $n \leq 4f$.




\subsubsection*{EC-BRB[4f+1]: Algorithm}

We present the peudo-code in Algorithms  \ref{alg:EC-BRB_4f+1-peer}, \ref{alg:EC-BRB_4f+1-peer-check}, and \ref{alg:EC-CRB-source}. The structure is similar to before. The key difference is that upon receiving the $\echo$ messages, each node uses decoder function to recover the original message $m$. If the source is non-faulty, then the error-correcting feature of MDS code trivially handles the corrupted coded element forwarded by other faulty nodes.  

The key challenge is to handle the colluding behaviors from Byzantine source and other nodes. For example, it is possible that some non-faulty node can correctly construct a message, but other non-faulty nodes are not able to. This is the reason that we need to have codeword distance at least $3f$.

Another aspect is that we use a plain RB, say Bracha's RB \cite{Bracha:1987:ABA:36888.36891} protocol, to reliably broadcast $\hash(m)$. This guarantees even if the source is faulty, non-faulty nodes cannot decode different values. Since $\hash(m)$ is a constant, it does not affect the overall bit complexity.

One interesting aspect is that the MDS code part takes care of some tedious check, so the logic in Algorithm \ref{alg:EC-BRB_4f+1-peer-check} is actually simpler. In particular, we do not need the rules for handling $\echo$ messages.

\begin{algorithm}[H]
\caption{
EC-BRB[4f+1]: source $s$ with message $m$ of index $h$}
\label{alg:EC-BRB_4f+1-source}
\begin{algorithmic}[1]
\Function{Reliable-Broadcast}{$m, h$}
    \State \textsc{Reliable-Broadcast}($
    \hashTag|\hash(m), h$)
    \State $\{c_1, c_2, \dots, c_n\} = $ ENC$(m)$
    \State {\sc Send}($\msg, s, c_k, h$) to node $k$ 
\EndFunction
\end{algorithmic}
\end{algorithm}

\begin{algorithm}
\caption{
EC-BRB[4f+1]: all node $i$ (including $s$) when receiving a message from node $j$}
\label{alg:EC-BRB_4f+1-peer}
\begin{algorithmic}[1]
\Function{Receiving}{$\msg, s, c, h$}
    \If{$j = s$\AND first $(\msg, s, *, h)$}
        \State $\cset[s, h] \gets  \cset[s, h] \cup \{c\}$ 
        \If{never sent $(\echo, s, *, h)$}
            \State {\sc Send}$(\echo, s, c, h)$ to all nodes
        \EndIf
    \EndIf
\EndFunction

\Function{Receiving}{$\echo, s, c, h$}
    \If{first ($\echo, s, *, h$) from $j$}
        \State $\cset[s, h] \gets \cset[s, h] \cup \{c\}$
        \If{$|\cset[s,h]| \geq n-f$}
        
            \State $m \gets$ DEC$(\cset[s,h])$
            \If{$m \neq $ ERROR}
                \State $\mset[s,h] \gets \mset[s,h] \cup \{m\}$
            
                \State \textbf{wait until} $\exists x \in \hset[s,h]$ s.t. $x = \hash(m)$
            
                \Comment{successful decoding}
                \If{never sent $(\acc, s, *, h)$ before}
                    \State {\sc Send}$(\acc, s, \hash(m), h)$ to all nodes
                \EndIf    
            \EndIf    
            
        \EndIf
    \EndIf
\EndFunction

\Function{Receiving}{$\acc, s, x, h$}
    \If{first ($\acc, s, *, h$) from $j'$} 
        \State $\ctr[\acc, s, x, h] ++$
        \If{$\ctr[\acc, s, x, h] = f+1$, and \\\hfill never sent $(\acc, s, *, h)$ before} 
            \State {\sc Send}($\acc, s, x, h$) to all nodes
            
        \EndIf
        \State Check($s,h$)
    \EndIf
\EndFunction

\Function{Receiving}{$\req, s, x, h$}
    \If{first ($\req, s, x, h$) from $j$} 
        \If{$\exists m \in \mset[s,h]$ s.t. $\hash(m) = x$}
            \State {\sc Send}  ($\fwd, s, m, \hash(m), h$) to $j$
        \EndIf
    \EndIf
\EndFunction

\Function{Receiving}{$\fwd, j, m, x, h$}
    \If{have sent ($\req, j, x, h$) to $j'$}  
        \If{first ($\fwd, j, m, x, h$) from $j'$, and
        $\hash(m) = x$}
            \State $\mset[j,h] \gets \mset[j,h] \cup \{m\}$
            \State Check($j,h$)
        \EndIf
    \EndIf
\EndFunction

\Function{Reliable-Accepting}{$\hashTag|H', h$} from source $j$
    \State $\hset[j,h] \gets \hset[j,h] \cup \{H'\}$
\EndFunction
\end{algorithmic}
\end{algorithm}

\begin{algorithm}
\caption{EC-BRB[4f+1]: helper function for all node $i$ (including $s$)}
\label{alg:EC-BRB_4f+1-peer-check}
\begin{algorithmic}[1]

\Function{Check}{$s, h$}
    \If{$\exists x \in \hset[s,h]$ s.t.\\\hfill $\ctr[\acc, s, x, h] \geq n - f$} 
        \If{$\exists m \in \mset[s,h]$ s.t. $\hash(m) = x$}
            \State {\sc Reliable-Accept}$(s,m,h)$
        \Else
            \State {\sc Send}  ($\req, s, \hash(m), h$) to these $n - f$ nodes
        \EndIf
        
    \EndIf
\EndFunction

\end{algorithmic}
\end{algorithm}

\subsubsection*{EC-BRB[4f+1]: Correctness}

We present proof sketch of the following theorem.

\begin{theorem}
H-BRB[4f+1] satisfies Property 1-5 given that $n \geq 4f+1$.
\end{theorem}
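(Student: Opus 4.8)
The plan is to establish the five properties of the $4f+1$ protocol of this section (written H-BRB[4f+1] in the statement, i.e.\ the protocol of Algorithms~\ref{alg:EC-BRB_4f+1-source}--\ref{alg:EC-BRB_4f+1-peer-check}) by separating the \emph{safety} properties, which I would argue almost entirely from the hash-based backbone, from the \emph{liveness} properties, where the coding theorems enter. The backbone is the auxiliary plain reliable broadcast of $\hash(m)$: by the correctness of that inner RB (Bracha's), every non-faulty node that ever populates $\hset[s,h]$ reliably-accepts the \emph{same} hash $H^\star$ and no other, and $H^\star=\hash(m)$ whenever $s$ is non-faulty. I would use this first to dispatch Validity and Integrity: acceptance in \textsc{Check} is gated on some $x\in\hset[s,h]$ with $n-f$ $\acc$'s, so if a non-faulty $s$ never broadcasts then $\hset[s,h]=\emptyset$ at every non-faulty node and \textsc{Check} never fires (Validity), while the inner-RB uniqueness of $H^\star$ together with collision resistance of $\hash$ forces at most one accepted content per $(s,h)$ (Integrity).

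For Agreement I would push the same hash-based reasoning: if two non-faulty nodes reliably accept $m$ and $m'$, each accepted with some $x\in\hset[s,h]$, and since the inner RB pins $\hset[s,h]\subseteq\{H^\star\}$ at every non-faulty node, both accept with $x=H^\star$; collision resistance then yields $m=m'$. As a safety belt I would also invoke the $\acc$-quorum intersection in the style of Lemma~\ref{lemma:H-BRB[3f+1]-acc}: two acceptance quorums of $n-f$ $\acc$'s intersect in $\geq n-2f$ nodes, of which $\geq n-3f\geq f+1$ are non-faulty, and a non-faulty node sends at most one $\acc$ value (the ``never sent $\acc$'' guard), so the two quorums cannot support distinct hashes. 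This part is deliberately coding-free.

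For Non-faulty Broadcast Termination I would argue forward using the coding guarantee: a non-faulty source encodes $m$ with the $[n,n-3f]$ MDS code and hands one element to each node, so every non-faulty node eventually collects $\geq n-f$ $\echo$ symbols containing at most $f$ corrupted ones; by Theorem~\ref{thm:coding} each decodes $m$ correctly, matches $\hash(m)=H^\star\in\hset[s,h]$, and sends $\acc(H^\star)$, so $\geq n-f$ honest $\acc$'s for $H^\star$ are delivered everywhere and \textsc{Check} accepts $m$ at every non-faulty node. Eventual Termination is the step I expect to be the main obstacle, because a \emph{Byzantine} source may commit $H^\star$ through the inner RB yet feed inconsistent coded elements so that some non-faulty nodes cannot decode $m$. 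The argument I would give has two ingredients. First, a node crossing the $n-f$ $\acc$ threshold for $H^\star$ is guaranteed a successful recovery: the $f+1$-amplification rule (triggered by the $\geq f+1$ honest $\acc$'s implied by the first honest accepter's quorum) drives all $n-f$ non-faulty nodes to emit $\acc(H^\star)$, and the $\acc$ cascade must originate at some non-faulty node that decoded-and-verified $m$ (amplification alone cannot start it), so that node still holds $m$ and answers the node's $\req(H^\star)$ with a hash-checked $\fwd(m)$. Second, I would invoke Theorem~\ref{thm:coding2} precisely to justify $n\geq 4f+1$: any two non-faulty decoders draw their $\geq n-f$ symbols from codewords differing in at most $f$ non-erased positions (the $\leq f$ Byzantine forwarders), so each non-faulty node either decodes the committed $m$ or \emph{detects} an error and falls back to $\req$/$\fwd$ recovery—never silently committing to a wrong message that would stall it. I would close by noting Theorem~\ref{thm:coding2}'s own caveat that it fails for $n\leq 4f$ (codeword distance only $3f+1$), which is exactly why the resilience cannot be pushed below $4f+1$, and remark that the tricky detail to verify is that the $\hash(m)=x$ guard on $\fwd$ plus inner-RB uniqueness blocks a Byzantine forwarder from injecting a spurious content during recovery.
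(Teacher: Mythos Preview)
Your proposal is correct and follows essentially the same decomposition as the paper's sketch: safety via the hash backbone and $\acc$-quorum structure, liveness via Theorems~\ref{thm:coding} and~\ref{thm:coding2}, with Eventual Termination handled by showing all non-faulty $\acc$'s agree (the paper packages this as Claim~\ref{claim:EC-BRB[4f+1]-acc}) and then invoking the $\req$/$\fwd$ recovery. The one minor difference is that for Agreement the paper argues purely by $\acc$-quorum intersection ($2(n-f)-n\geq f+1$ overlap forces a non-faulty node to have sent two $\acc$'s), whereas you lead with the inner-RB uniqueness of $\hset$ and keep quorum intersection as a backup; both routes are valid and yours is arguably cleaner given that \textsc{Check} already gates on $x\in\hset[s,h]$.
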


Property 1 (Non-faulty Broadcast Termination) is similar to before. The only two new aspects are: (i) non-faulty nodes will eventually be able to reliably accept $\hash(m)$; and (ii) message $m$ can be correctly constructed. (i) is due to the property of Bracha's protocol, and (ii) is due to the feature of MDS code.

Property 2 (Validity) and Property 4 (Integrity) can be proved similar, which are essentially due to the $n-f$ threshold at Line 9 and $f+1$ threshold at Line 19 of Algorithm \ref{alg:EC-CRB-peer}. 

Now, we show that EC-BRB[4f+1] satisfies Property 3 (Agreement). For a given $s, h$, suppose by way of contradiction, two non-faulty nodes $a$ and $b$ reliably accept two values $v_a$ and $v_b$, respectively. Suppose $v_a \neq v_b$. This means that $\ctr[\acc, s, \hash(v_a), h] \geq n-f$ at node $a$ and
$\ctr[\acc, s, \hash(v_b), h] \geq n-f$ at node $b$. By construction, each node only sends $\acc$ message once. This means that we have $2(n-f)-n \geq n-2f \geq f+1$ nodes that send contradicting $\acc$ messages, a contradiction.

Finally, we prove the following lemma.

\begin{lemma}
EC-BRB[4f+1] satisfies Property 5 (Eventual Termination).
\end{lemma}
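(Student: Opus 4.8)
The plan is to show that if some non-faulty node $a$ reliably accepts $(s,m,h)$, then every non-faulty node eventually does too. First I would unwind what $a$'s acceptance implies: by the \textsc{Check} predicate in Algorithm~\ref{alg:EC-BRB_4f+1-peer-check}, node $a$ saw some $x \in \hset[s,h]$ with $\ctr[\acc,s,x,h] \geq n-f$ and held an $m \in \mset[s,h]$ with $\hash(m)=x$. Since $\hset$ is populated only by reliably accepting the $\hashTag$ broadcast, the Agreement and Eventual-Termination properties of the embedded Bracha RB guarantee that \emph{every} non-faulty node eventually has the \emph{same} $x = \hash(m)$ in its $\hset[s,h]$, and no other value. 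So the only two things a generic non-faulty node $b$ still needs are: (a) to accumulate $\ctr[\acc,s,x,h] \geq n-f$, and (b) to obtain an actual preimage $m'$ with $\hash(m')=x$; by collision resistance $m'=m$, so once (a) and (b) hold, $b$'s \textsc{Check} fires and $b$ reliably accepts $(s,m,h)$.

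Next I would establish (a) — the $\acc$-propagation argument. Among the $\geq n-f$ distinct nodes that sent $(\acc,s,x,h)$ and were counted at $a$, at least $n-2f \geq f+1$ are non-faulty (using $n \geq 4f+1$, in fact $n-2f\ge 2f+1$). Their $\acc$ messages eventually reach every non-faulty node by the reliable-channel assumption, so every non-faulty node eventually has $\ctr[\acc,s,x,h]\geq f+1$, which by the \textsc{Receiving}$(\acc,\dots)$ handler makes it send its own $(\acc,s,x,h)$ (if it has not already sent some $\acc$). Here I must argue that a non-faulty node never sends $\acc$ for a \emph{different} value $x'\neq x$: it sends $\acc$ either after decoding and matching against $\hset$ — which by the Bracha-agreement on $\hash(m)$ forces the value to be $x$ — or after seeing $f+1$ matching $\acc$'s, which (by an inductive/"first deviating node" argument identical in spirit to Lemma~\ref{lemma:H-BRB[3f+1]-acc}) must also be $x$. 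Hence all $n-f$ non-faulty nodes send $(\acc,s,x,h)$, these all eventually arrive everywhere, and every non-faulty node reaches $\ctr[\acc,s,x,h]\geq n-f$.

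Then I would establish (b) — that every non-faulty node eventually holds a preimage of $x$. There are two routes into $\mset[s,h]$: successful decoding in the \textsc{Receiving}$(\echo,\dots)$ handler, or a $\fwd$ reply to a $\req$. A node that cannot decode takes the \textsc{Check} branch that sends $(\req,s,x,h)$ to the $n-f$ nodes whose $\acc$ it counted; since $\geq f+1$ of those are non-faulty and at least one non-faulty node among them must hold $m$ with $\hash(m)=x$ (indeed node $a$ does, and more generally any non-faulty node that sent $\acc$ via the decoding branch holds such an $m$), that node answers with $\fwd(s,m,\hash(m),h)$, which passes the $\hash(m)=x$ check at the requester and lands $m$ in its $\mset[s,h]$. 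I should double-check the ordering subtlety: \textsc{Check} is invoked from the $\acc$ handler, so by the time $\ctr[\acc,s,x,h]\geq n-f$ the node will have run \textsc{Check} and issued the $\req$; and it re-runs \textsc{Check} upon receiving the $\fwd$, at which point both conditions are met and \textsc{Reliable-Accept}$(s,m,h)$ is called. The main obstacle is precisely this interleaving/liveness bookkeeping around the \textsc{Check}–\textsc{Req}–\textsc{Fwd} loop — making sure that the guard "never sent $\acc$ before" and the one-shot nature of the handlers do not deadlock a non-faulty node into a state where it has $n-f$ $\acc$'s but never fetches the preimage — together with the "no non-faulty node ever sends $\acc$ for the wrong $x$" claim, which is where Theorem~\ref{thm:coding2} (the $n\ge 4f+1$ bound on decoding two close codewords) does its essential work in ruling out a faulty source splitting the non-faulty nodes across two decodable values.
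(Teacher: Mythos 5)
Your proposal is correct and follows essentially the same route as the paper's proof: it uses the fact that at least $n-2f \geq f+1$ non-faulty nodes sent $(\acc,s,x,h)$ so every non-faulty node reaches the $f+1$ relay threshold and itself sends $\acc$ for $x$, invokes Theorem~\ref{thm:coding2} (via the uniqueness claim on $\acc$ values) to rule out a competing hash, and closes with the decode-or-$\req$/$\fwd$ dichotomy for obtaining the preimage. Your write-up is more explicit about the Bracha-RB agreement on $\hset$ and the \textsc{Check}--\textsc{Req}--\textsc{Fwd} bookkeeping, but the decomposition and key lemmas are the same.
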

\begin{proof}
Suppose a non-faulty node $u$ \textsc{reliable-accept}($s, m, h)$, which means it has $\ctr[\acc, s, x, h] \geq n-f$ and $\exists x \in \hset[s,h]$ s.t. $x = \hash(m)$. Therefore, all the other non-faulty nodes would have $\ctr[\acc, s, x, h] \geq n-2f \geq f+1$ and eventually reliably accept $x$.

Theorem \ref{thm:coding2} implies that it is impossible for a non-faulty node $i$ to decode a message $m' \neq m$ in round $h$ for source $s$. Therefore, we have the following claim.

\begin{claim}
\label{claim:EC-BRB[4f+1]-acc}
If two non-faulty nodes $i$ and $j$ sends ($\acc, s, \hash(m), h$) and ($\acc, s, \hash(m'), h$) messages, respectively, then $m = m'$.
\end{claim}


Line 19 of Algorithm \ref{alg:EC-BRB_4f+1-peer} and Claim \ref{claim:EC-BRB[4f+1]-acc} ensure that all non-faulty nodes will eventually send an $\acc$ message advocating $x = \hash(m)$.
Therefore, eventually $\ctr[\acc, s, x, h] \geq n-f$ at every non-faulty node. At that point, each non-faulty node will either request a value from a set of nodes, particularly from $u$, or have collected enough correct coded elements to decode the value $m$. 
\end{proof}

\subsection{Discussion}

It is natural to ask whether it is possible to have an algorithm that does \textit{not} require exponential local computation or flooding of the original message, and requires only $3f+1$ nodes. We briefly discuss why such an algorithm is difficult to design, if not impossible. EC-BRB[4f+1] requires at least $4f+1$ nodes, as we require codeword distance to be $3f+1$. It is because that when $n=4f$, the adversary (including equivocating source) is able to force a group of non-faulty nodes to decode $A$ and the other to decode $B \neq A$, which defeats the purpose of using error-correcting code. Essentially, this creates the split-brain problem. Similarly, if we use codeword distance less than or equal to $3f$. Then it is again possible to force two group of non-faulty nodes to decode different values. In other words, techniques other than error-correcting codes need to be used to address this issue.






\commentOut{+++++++
\begin{theorem}
    It is impossible to guarantee Byzantine reliable broadcast if the source is using MDS code with distance $\leq 3f$.
\end{theorem}

\begin{proof}
We identify a scenario such that Property 5 (Eventual Termination) can never be achieved with a faulty source. 

Suppose the faulty source
computes $\{c_1, \dots, c_n\} \gets ENC(m)$ and $\{c_1', \dots, c_n'\} \gets ENC(m')$.
equivocates so that 

1, 2, 3, 4, 5
c, c, c, -, c
-, c', c', c', c'


\end{proof}

\begin{theorem}
    Consider an algorithm such that the source uses MDS code with distance $3f$. It is impossible to guarantee reliable broadcast if $n \leq 4f$.
\end{theorem}
+++++++++++}

\section{Impossibility Results}\label{section:impossibility} 

\subsection{Preliminaries}




To facilitate the discussion of our impossibility proof, we introduce formalized definition of our systems and useful notions such as time, execution, and phase. Note that these notions are well studied in the distributed computing literature~\cite{Lynch96,AttiyaW2004}. 
We include them here for completeness.

The system is made up of a finite, non-empty static set $\Pi$ of $n$  asynchronous \textit{nodes}, 
of which at most $f$ maybe \textit{Byzantine} faulty. 
An asynchronous node can be modeled as a state machine with a set of states $\mathcal{S}$. Every node $p$ starts at some initial state $s_p^i \in \mathcal{S}$. We consider event-driven protocols. That is, state transitions are triggered by the occurrences of events. Possible triggering events for node $p$ are: Reliable-Broadcast request ({\sc Reliable-Broadcast}$_p$) and receiving a message ({\sc Receiving}$_p$). 
A \textit{step} of a node $p$ is a 5-tuple $(s, T, m, R, s')$ where $s$ is the old state, $T$ is the triggering event, $m$ is the message to be sent, $R$ is a response ({\sc Send}$_p$ and {\sc Reliable-Accept}$_p$) or $\perp$, and $s'$ is the new state. The values of $m$, $R$ and $s'$ are determined by a transition function\footnote{The transition function is determined by the protocol or algorithm} applied to $s$ and $T$. The response to {\sc Reliable-Broadcast} is {\sc Send}$_p$, and  the response to {\sc Receiving}$_p$ is either {\sc Send}$_p$ or {\sc Reliable-Accept}$_p$. 
%
If the values of $m$, $R$, and $s'$ are determined by the node’s transition function applied to $s$ and $T$, then the step is said to be \textit{valid}. 
In an \textit{invalid }step (taken by Byzantine node), the values of $m$, $R$, and $s'$ can be arbitrary. 

A \textit{view} of a node is a sequence of steps such that:
(i) the old state of the first step is an initial state;
   (ii) the new state of each step equals the old state of the next step.
%
A point in \textit{time} is represented by a non-negative real number. 
A \textit{timed view} is a view whose steps occur at non-decreasing times. If a view is infinite, the times at which its steps occur must increase without bound. %
If a message $m$ sent at time $t$ is received by a node at time
$t'$, then the delay of this message is $t'-t$. This encompasses transmission delay as well as time for for handling the message at both the sender and receiver. 

An \textit{execution} $e$ is a possibly infinite set of timed views, one for each node that is  present in the system, that satisfies the following assumptions:
\begin{enumerate}
    \item Every message \textsc{Send} has at most one matching \textsc{Receive} at the receiving node and every message \textsc{Receive} has exactly one matching message \textsc{Send}.
    \item If a  non-faulty node $p$  \textsc{Send}'s message $m$  to a non-faulty node $q$ at time $t$, then $q$ \textsc{Receive}'s message $m$ at finite time $t' \ge t$ 
\end{enumerate}
An execution is said to be \textit{legal} if at least $n-f$ timed views in the execution are valid. We consider an algorithm to be correct if every execution of the algorithm satisfies properties 1 to 5 in Section~\ref{subsection:properties}.

Now, we introduce some notions with respect to the broadcast problem on any execution prefix $e$:
\begin{itemize}
    \item The  maximum message delay  in any execution prefix $e$ is called the $phase$. 
    \item If a node $p$ hears about a message directly from the sender, it is a {\em direct witness}. The sender of a message is also a direct witness.
    
    \item A $witness(m)$ message is a message sent by a node $p$ announcing that $p$ has either received $m$ directly from the sender or heard of $f+1$ nodes that have sent $witness(m)$ messages supporting $m$. The sender can also send out $witness(m)$ messages. Nodes send out witness messages to themselves as well. We assume that all witness messages that other than the constant size helper messages (in Section~\ref{s:H-BRB}) contain the original message piggybacked onto it. 
    
    \item Node $p$ is an {\em indirect witness} of message $m$, if $p$ hears of at least $f+1$ witness($m$) messages. 
    \item A \textit{helper($m$)} message is a constant sized messages created from some arbitrary function $f(m)$. In the sections above, the hash function is used to create helper messages. 
\end{itemize}
	
In reliable broadcast algorithms presented in this paper and other prior implementations~\cite{Bracha:1987:ABA:36888.36891, Raynal18, DBLP:journals/corr/ImbsR15}, $witness(m)$ messages can be in the form of Echo, Ready, Accept, etc. For our impossibility proofs, we consider the family of algorithms in which when nodes become a witness to some message  $(m,h)$, they have to 
send out a witness message supporting $(*,h)$. 
Since we consider an asynchronous system, nodes do not have the capability to measure time and thus the algorithm is event-driven. That is, the algorithm logic has to be based on waiting for a certain number of messages before making a decision.

For all event-driven witness-based algorithms, the following observation applies. The reason is that Byzantine nodes can choose not to send out witness messages, and if an algorithm needs to wait for more than $n-f$ witness messages to proceed, then a non-faulty node could never make progress.
\begin{observation}~\label{Obsv:Sufficient}
	A node cannot wait for $>n-f$ witness messages for message $m$ to reliably deliver $m$.
\end{observation}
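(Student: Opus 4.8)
The plan is a standard adversary-silence / indistinguishability argument by contradiction. Suppose, toward a contradiction, that some correct event-driven witness-based RB algorithm has the feature that in some execution a non-faulty node $p$ is required to collect strictly more than $n-f$ witness messages supporting an index $h$ before it is allowed to \textsc{Reliable-Accept}$(s,m,h)$. I would exhibit a single execution in which $p$ can provably never accumulate that many witness messages, while the source is non-faulty, thereby violating Property 1 (Non-faulty Broadcast Termination).

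First I would fix the adversary. Let $F\subseteq\Pi$ with $|F|=f$ be the Byzantine set, and have every node in $F$ act as if it had crashed before taking any step: it sends nothing at all, and in particular no witness message for index $h$. Let the source $s\notin F$ be non-faulty and invoke \textsc{Reliable-Broadcast}$(m,h)$, and let all messages among the $n-f$ non-faulty nodes be delivered in finite time. Because a witness to $(m,h)$ sends a witness message for $(\ast,h)$ only among these $n-f$ non-faulty nodes, at most $n-f$ distinct witness messages supporting $h$ are ever generated in this execution. Consequently the witness counter at any non-faulty node, and in particular at $p$, stays bounded by $n-f$ forever, the threshold ``$>n-f$'' is never met, and $p$ never accepts. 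Since $s$ is non-faulty, this contradicts Property 1.

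The step I expect to require the most care is pinning down what ``a node must wait for $>n-f$ witness messages'' means and closing off any alternative route to acceptance. In the family of algorithms considered here the protocol is asynchronous and event-driven, so a node has no clock and cannot time out; its decision to \textsc{Reliable-Accept} is therefore a predicate on the multiset of messages it has received. If the sole enabling predicate requires strictly more than $n-f$ witnesses for $h$, there is genuinely no other path to acceptance, and the construction above bites. A secondary point worth making explicit is indistinguishability: from $p$'s local view the execution above is identical to an imagined execution in which the $f$ silent nodes are merely slow rather than Byzantine, so $p$ cannot legitimately ``give up'' and accept on the basis of timing alone — which is precisely why a correct algorithm may not impose a witness threshold above $n-f$. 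This is also the reason every threshold appearing in our protocols is at most $n-f$.
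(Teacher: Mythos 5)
Your argument is correct and is essentially the paper's own justification, which appears as the sentence immediately preceding the observation: Byzantine nodes can simply withhold witness messages, so a threshold above $n-f$ can never be met and a non-faulty node could never make progress. You merely formalize this as an explicit silent-adversary execution violating Property 1, which is a faithful elaboration rather than a different route.
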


\subsection{Impossibility: Byzantine RB in 2 phases}


For Lemmas~\ref{Obsv:Necessary} and~\ref{lem:Uniqueness_1} and Theorem~\ref{theorem:2_round_impossibility}, consider a system $\Pi$ of size $|\Pi|= n = 5f$, composed of node sets $ S_1, S_2, S_3,S_4$ and $B$. Each of these sets is of size $f$, i.e., $|S_1| = |S_2|= \cdots = |B| = f$. Assume set $B$ to be the set of Byzantine faulty nodes. For all the proofs below we consider an execution prefix and the maximum message delay or phase in each prefix is considered to be $1$ time unit. Note that in these proofs, all messages (unless mentioned otherwise) have a fixed message delay of $1$ unit of time. In this case, by definition, phase $r$ refers to the time interval $[r,r+1)$. 

\begin{restatable}{lemma}{obsvnecessary}~\label{Obsv:Necessary}
	A node must wait for $\ge \floor{\frac{n+f}{2}}+1$ witness messages for message $m$, to reliably deliver $m$.
	\begin{proof}
    Assume by contradiction Algorithm $\mathbb{A}$ reliably delivers message $m$ at node $p$ after  receiving $\floor{\frac{n+f}{2}}$ $witness(m)$ messages at $p$.
    
    We now consider a scenario where Property 3 (Agreement) is violated. 
    Node $b\in B$ equivocates and sends out message $m_1$ to nodes in set $S_1\cup S_2$ and message $m_2$ to nodes in $S_3\cup S_4$, 
    associated with index $h$. Each non-faulty node sends out a witness message as per the message received from $b$ with index $h$. Since the sender $b$ is also a direct witness, it is allowed to send out witness messages. All the Byzantine nodes in $B$ including node $b$ equivocate and send out $witness(m_1)$ to the nodes in $S_1\cup S_2$ and $witness(m_2)$ to the nodes in $S_3\cup S_4$. As a result, nodes in $S_1\cup S_2$ get $3f$, $witness(m_1)$ messages and  $2f,$ $witness(m_2)$ messages and nodes in $S_3\cup S_4$ get $3f$ $witness(m_2)$ messages and $2f$ $witness(m_1)$ messages. 
    
    Note that in this system, 
    $\floor{\frac{n+f}{2}} = \floor{\frac{5f+f}{2}} = 3f$. By the statement of the lemma, nodes wait for only $3f$ $witness(m)$ messages to deliver any message $m$. Thus nodes in $S_1\cup S_2$ deliver message $m_1$ and nodes in $S_3\cup S_4$ deliver message $m_2$  with respect to index $h$, violating Property 3 (Agreement).


\end{proof}

\end{restatable}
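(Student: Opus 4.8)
The plan is to argue by contradiction, exhibiting a single legal execution in which an algorithm that delivers on too few witness messages forces two non-faulty nodes to deliver different values, violating Property 3 (Agreement). Morally this is a quorum–intersection bound: a ``deliver after $t$ $witness(m)$ messages'' rule is safe only if any two size-$t$ witness sets in the $n$-node universe overlap in a non-faulty node, i.e. only if $2t-n>f$, equivalently $t\ge\floor{\frac{n+f}{2}}+1$; the proof just instantiates the failure of this condition. I would work in the fixed system $\Pi$ of size $n=5f$ introduced above, with non-faulty groups $S_1,S_2,S_3,S_4$ and Byzantine group $B$, each of size $f$; here $\floor{\frac{n+f}{2}}=3f$, so the negation of the lemma is the assumption that some correct witness-based algorithm $\mathbb{A}$ lets a node reliably deliver $m$ as soon as it has collected $3f$ distinct $witness(m)$ messages.

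The execution I would build has the source be a Byzantine node $b\in B$ that equivocates: at time $0$ it sends $m_1$ (with index $h$) to every node of $S_1\cup S_2$ and $m_2\ne m_1$ (also with index $h$) to every node of $S_3\cup S_4$. Each non-faulty node, upon receiving its value from $b$, becomes a direct witness and, per the witness-based family under consideration, sends the corresponding $witness(\cdot)$ message to all nodes (including itself). In addition, every node of $B$ equivocates at the witness level, sending $witness(m_1)$ to $S_1\cup S_2$ and $witness(m_2)$ to $S_3\cup S_4$. Using asynchrony I schedule the cross-group witness messages ($S_3\cup S_4\to S_1\cup S_2$ and vice versa) to arrive only after the within-group witnesses and the $B$-originated ones; then a node in $S_1\cup S_2$ accumulates the $2f$ $witness(m_1)$ messages from $S_1\cup S_2$ together with the $f$ from $B$, i.e. $3f$ in total, before any information about $m_2$ is relevant, and symmetrically every node of $S_3\cup S_4$ accumulates $3f$ $witness(m_2)$ messages. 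Only the $f$ views of $B$ are invalid, so with $n-f=4f$ valid views the execution is legal.

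By the assumed rule of $\mathbb{A}$, every node of $S_1\cup S_2$ now reliably delivers $m_1$ and every node of $S_3\cup S_4$ reliably delivers $m_2$; since $S_1\cup S_2$ and $S_3\cup S_4$ contain only non-faulty nodes and $m_1\ne m_2$, this contradicts Property 3 (Agreement). Hence no correct witness-based algorithm can deliver on fewer than $3f+1=\floor{\frac{n+f}{2}}+1$ witness messages, which is the claim; combined with Observation~\ref{Obsv:Sufficient} it also pins the usable threshold into the band $\floor{\frac{n+f}{2}}+1\le t\le n-f$.

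I expect the main obstacle to be making the counting argument airtight rather than any deep idea: I must (i) fix precisely what it means for an event-driven, witness-based algorithm to ``wait for $t$ witness messages'' so that $t$ is a well-defined object independent of which value is being witnessed, and (ii) verify that asynchrony genuinely licenses the schedule above — in particular that a node of $S_1\cup S_2$ is forced to act on its $3f$ $witness(m_1)$ messages before any cross-group $witness(m_2)$ message could perturb its decision, and that the chosen delays are consistent with the phase bound of $1$ time unit used throughout this section. The remaining pieces (the $n=5f$ arithmetic, the legality count, the final appeal to Agreement) are routine.
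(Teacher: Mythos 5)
Your proposal is correct and follows essentially the same argument as the paper's proof: the same $n=5f$ partition, the same equivocating source $b\in B$ sending $m_1$ to $S_1\cup S_2$ and $m_2$ to $S_3\cup S_4$, the same count of $3f=\floor{\frac{n+f}{2}}$ witness messages per side, and the same appeal to Property 3 (Agreement). Your added remarks on quorum intersection, message scheduling, and execution legality only make explicit what the paper leaves implicit; they do not change the route.
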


\begin{restatable}{lemma}{uniquenessOne}\label{lem:Uniqueness_1}
	If non-faulty nodes send contradicting witness messages corresponding to a message with index $h$, Property 3 (Agreement) is violated. 
	\begin{proof}
	The proof is by contradiction. Assume Algorithm $\mathbb{A}$ ensures reliable broadcast.  
	 Consider a prefix of an execution we term Exec-1 as follows.\\ 
	\textbf{{Exec-1:}}
	\begin{itemize}
		\item \textbf{Phase $r$:} Assume node $b \in B$ sends message $m_1$ to nodes in  $S_1$ and message $m_2(\ne m_1)$ to node $S_4$ in phase $r$. 
		Nodes in $S_1$ and $S_4$ hear about  messages $m_1$ and $m
		_2$ respectively before the beginning of the next phase. 
		\item \textbf{Phase $r+1$:} In this phase, the following messages are sent out.\\
		\textit{Byzantine nodes in $B$:}
		\begin{enumerate}
		    \item All the nodes in  $B$ send out $witness(m_1)$ to nodes in $S_1$ and $witness(m_2)$ to nodes in  $S_2$. 
		    \item Node $b\in B$ sends out $witness(m_1)$ to all nodes in $S_2\cup S_3$. These witness messages are fast and take time $t' \ll 1$  and reach the nodes in $S_2\cup S_3$ at time $(r+1)+t'$.
		    \item Nodes $B-\{b\}$ send out $witness(m_2)$ messages to all nodes in $S_2\cup S_3$. These messages are also fast but slower than the $witness(m_1)$ messages sent by $b$  and take $2\cdot t'$ time units where, and reach the nodes in $S_2\cup S_3$ at time $(r+1)+2t'$. Note that $2\cdot t' \ll 1$. 
		\end{enumerate}  
 		
 		\vspace{3pt}
 		
		\textit{Nodes in $S_1$:}\\
		Nodes in  $S_1$ send out $witness(m_1)$ message to all nodes. These messages are very fast and take time $t' \ll 1$  and reach the nodes in $S_2\cup S_3$ at time $(r+1)+t'$. 
		
		\vspace{3pt}
		
		\textit{Nodes in $S_4$:}\\
		Nodes in  $S_4$ send out  $witness(m_2)$ messages to  all nodes.  These messages are fast but slower than the $witness(m_1)$ messages sent by nodes in $S_1$  and take $2\cdot t'$ time units; where,  $2\cdot t'\ll 1$. 
		\\\textit{Nodes in $S_2\cup S_3$:}\\Consider the time interval $[r+1+t', r+1+2\cdot t']$. 
		 At time $r+1+t'$, these nodes receive $f+1$ messages in the form of $witness(m_1)$, and become indirect witnesses for message $m_1$. As a result they send out $witness(m_1)$ messages to everyone. The messages to the nodes in $S_1$ are very fast and have a delay of $t' \ll 1$ time units but reach everyone else slowly, with a delay of $1$ time unit. 
		 
		 At time $r+1+2t'$, nodes in $S_2\cup S_3$ receive $2f-1$ messages of the form $witness(m_2)$ from nodes in $S_4\cup B -\{b\}$.  At this point, these nodes also become indirect witnesses for message $m_2$ and send out witness messages of the form $witness(m_2)$ to everyone. By the assumption on Algorithm $\mathbb{A}$, this is legal and the prefix of the execution is still valid. These witness messages are fast (delay = $t'$) and reach all nodes by time $r+1+3t'$. 

		
	\end{itemize}
	\begin{itemize}
		\item \textbf{Phase $r+2$:} \\
		By time $r+2$, all nodes in $S_1$ have received $n-f$ $witness(m_1)$ messages from $B\cup S_1\cup S_2 \cup S_3$ and all nodes in $S_4$ have received $n-f$ $witness(m_2)$ messages from $S_2\cup S_3 \cup S_4 \cup B$. By Observation~\ref{Obsv:Sufficient}, nodes in $S_1$ and $S_4$ cannot wait for more witness messages and thus deliver messages $m_1$ and $m_2$, respectively, violating Property 3 (Agreement). 
		
	\end{itemize}

	Thus a non faulty node may not send out contradicting witness messages for any given message index. 

\end{proof}
	\end{restatable}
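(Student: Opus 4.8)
The plan is a proof by contradiction, in the same style as Lemma~\ref{Obsv:Necessary}. I would assume there is a correct reliable-broadcast algorithm $\mathbb{A}$ that nevertheless admits a legal execution in which two non-faulty nodes send contradicting witness messages $witness(m_1)$ and $witness(m_2)$, $m_1 \neq m_2$, for the same index $h$; I then build a legal execution in which Property~3 (Agreement) is violated, contradicting the correctness of $\mathbb{A}$. The guiding intuition is that, with $n = 5f$ and the partition $S_1, S_2, S_3, S_4, B$, the middle sets $S_2 \cup S_3$ can be made to act as ``swing'' witnesses: if they are allowed to become indirect witnesses of \emph{both} $m_1$ and $m_2$, they simultaneously push $S_1$ over the delivery threshold for $m_1$ and $S_4$ over the delivery threshold for $m_2$.

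Concretely, I would have a Byzantine node $b \in B$ equivocate in phase $r$, sending $m_1$ (index $h$) to the nodes of $S_1$ and $m_2$ to the nodes of $S_4$; being direct witnesses, each such non-faulty node sends its own witness message, so $S_1$ emits $witness(m_1)$ and $S_4$ emits $witness(m_2)$. In phase $r+1$ the Byzantine set $B$ plays both sides: a suitably timed subset sends $witness(m_1)$ to $S_2 \cup S_3$ so that, together with the $f$ genuine $witness(m_1)$ messages from $S_1$, the nodes of $S_2 \cup S_3$ collect $f+1$ copies, become indirect witnesses of $m_1$, and emit $witness(m_1)$ to everyone; a slightly later Byzantine wave together with the $f$ genuine $witness(m_2)$ messages from $S_4$ gives those same nodes $f+1$ copies of $witness(m_2)$, so — and this is exactly the step licensed by the contradiction hypothesis — they also emit $witness(m_2)$. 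All $n-f = 4f$ non-faulty views remain valid, so the prefix is legal. By time $r+2$, the nodes of $S_1$ have received $witness(m_1)$ from $S_1 \cup B \cup S_2 \cup S_3$, i.e.\ $4f = n-f$ of them, and symmetrically the nodes of $S_4$ have received $n-f$ copies of $witness(m_2)$; intra-phase delays are chosen $\ll 1$ (as in Lemma~\ref{Obsv:Necessary}) and the cross-traffic (e.g.\ $witness(m_1)$ toward $S_4$) is delayed a full time unit so that neither group has accumulated enough witnesses for the other value before $r+2$. By Observation~\ref{Obsv:Sufficient} neither $S_1$ nor $S_4$ may wait for more than $n-f$ witness messages, hence $S_1$ must reliably deliver $m_1$ and $S_4$ must reliably deliver $m_2$, violating Agreement.

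The main obstacle is not the idea but the bookkeeping that makes the execution rigorously admissible: one must pick the relative intra-phase delays so the swing nodes cross the $f+1$ threshold for $m_1$ strictly before doing so for $m_2$, verify that the message counts $f+1 \le (\text{genuine witnesses}) + (\text{Byzantine witnesses})$ hold on \emph{both} sides (the $m_2$ side is tighter since only $S_4$ and part of $B$ supply it, so the Byzantine schedule may need all of $B$ to contribute $witness(m_2)$ to $S_2 \cup S_3$), and confirm that every triggering event assumed in the views of $S_2 \cup S_3$ actually occurs so those views stay valid under $\mathbb{A}$'s transition function. I would also note that $n-f = 4f \ge 3f+1$, so the nodes forced to deliver are consistent with the lower bound of Lemma~\ref{Obsv:Necessary}; this confirms the contradiction is genuinely with Agreement rather than being pre-empted by the waiting bound.
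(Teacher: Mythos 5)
Your proposal is correct and follows essentially the same route as the paper's proof: the same partition $S_1,\dots,S_4,B$ with $n=5f$, the same equivocation by $b$ in phase $r$, the same use of $S_2\cup S_3$ as ``swing'' nodes that are driven past the $f{+}1$ indirect-witness threshold for $m_1$ first and then for $m_2$ by staggering intra-phase delays, and the same final count of $n-f$ witnesses at $S_1$ and $S_4$ combined with Observation~\ref{Obsv:Sufficient} to force conflicting deliveries. The only differences are presentational (the paper fixes the Byzantine schedule explicitly as $b$ versus $B\setminus\{b\}$, yielding $f{+}1$ and $2f{-}1$ witnesses respectively, where you leave the subset choice to be ``suitably timed''), so no further comparison is needed.
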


\begin{theorem}
	\label{theorem:2_round_impossibility}
	It is impossible to guarantee reliable broadcast in 2 phases of communication after a non-faulty node hears about a broadcast message. 
\end{theorem}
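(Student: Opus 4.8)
We read the statement as: no correct witness‑based reliable‑broadcast algorithm can guarantee that every non‑faulty node reliably accepts the message within two phases of first hearing about it. The plan is to assume such an algorithm $\mathbb{A}$ exists and build one legal execution, with a \emph{non-faulty} source, in which some non‑faulty node $p$ hears about the message at the start of a phase yet holds too few witness messages two phases later to be permitted to accept it. Since the source is non‑faulty, Property~1 (Non-faulty Broadcast Termination) forces $p$ to accept, and the two‑phase hypothesis forces it to do so on time; the contradiction follows. The facts I would use are Lemma~\ref{Obsv:Necessary} --- in $\Pi$ of size $n=5f$, no node may reliably accept $m$ before collecting $\floor{\frac{n+f}{2}}+1 = 3f+1$ witness messages for $m$ --- together with the witness‑based structure recorded in Lemma~\ref{lem:Uniqueness_1} and Observation~\ref{Obsv:Sufficient}, by which a node that has not heard $m$ directly from the source becomes a witness of $m$ (hence emits a witness message) only after receiving $f+1$ witness messages for $m$.

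Take $\Pi = S_1 \cup S_2 \cup S_3 \cup S_4 \cup B$ with each set of size $f$, let the $f$ nodes of $B$ be Byzantine and stay silent, place the source in $S_1$, and make every message delay at most $1$ (so a phase is one time unit). At time $r$ the source invokes \textsc{Reliable-Broadcast}$(m,h)$ and sends $\msg$ to all nodes; the $\msg$ messages to $S_1$ have delay $\approx 0$, those to $S_2 \cup S_3 \cup S_4$ have delay exactly $1$. All witness messages exchanged inside $S_1$ have delay $\approx 0$; all from $S_1$ to $S_2 \cup S_3 \cup S_4$ have delay $1$; and all from $S_2 \cup S_3 \cup S_4$ back to $S_1$ have delay $1$. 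Fix a non‑faulty $p \in S_1$ distinct from the source (when $f=1$, place the source in $S_2$ instead).

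Now track $p$. It hears about $m$ at time $r$, immediately emits its witness message, and by just after time $r$ holds the $f$ witness messages originating in $S_1$. The nodes of $S_2 \cup S_3 \cup S_4$ receive $\msg$ only at time $r+1$, and cannot become witnesses of $m$ earlier: the only witness messages addressed to them before $r+1$ are the $f$ from $S_1$, and $f < f+1$, so no amplification fires --- this is exactly where $n=5f$ rather than $5f+1$ is used. Hence $S_2 \cup S_3 \cup S_4$ first emit witness messages at time $r+1$, reaching $p$ only at time $r+2$, i.e. at the start of the \emph{third} phase after $p$ heard about $m$. So throughout the two phases $[r,r+1)$ and $[r+1,r+2)$, node $p$ holds at most $f$ witness messages for $m$; since $f < 3f+1 = \floor{\frac{n+f}{2}}+1$, Lemma~\ref{Obsv:Necessary} forbids $p$ from reliably accepting $m$ within those two phases. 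But the source is non‑faulty and performed \textsc{Reliable-Broadcast}$(m,h)$, so Property~1 requires $p$ to accept $(s,m,h)$, and the hypothesis says within two phases of hearing --- contradiction. (Indeed $p$ accepts, but only at time $r+2$, the third phase; once $n \geq 5f+1$ the obstruction vanishes, matching H-BRB[5f+1].) I would also check the prefix is legal --- the $4f = n-f$ nodes of $S_1 \cup S_2 \cup S_3 \cup S_4$ all follow $\mathbb{A}$ --- and extends to a full execution satisfying Properties~1--5 (after $r+2$ each node of $S_1$ holds $4f$ witness messages for $m$).

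The crux is the timing bookkeeping: keeping every delay $\le 1$ so that ``phase'' is not stretched (which would give $p$ the extra time), while forcing the $3f$ useful witness messages from $S_2 \cup S_3 \cup S_4$ to land exactly at the opening of the third phase and not inside the first two; and ruling out any alternative route by which $p$ --- or any node that could then feed $p$ within two phases --- reaches the $3f+1$ threshold sooner, which is where Lemma~\ref{Obsv:Necessary}, Observation~\ref{Obsv:Sufficient}, and $|S_1|=f$ do the work (no subquorum of $f$ silent/echoing nodes triggers amplification, and nobody may accept on fewer than $3f+1$ witnesses). An equivalent route, closer to the proof of Lemma~\ref{lem:Uniqueness_1}, lets the source equivocate and uses that lemma to pin down the other nodes' support, then recovers the same contradiction by indistinguishability with the non‑faulty‑source execution; the direct version above is cleaner.
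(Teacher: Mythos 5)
Your construction is genuinely different from the paper's --- you keep the source non-faulty and rely purely on asymmetric delays --- but it has a gap I believe is fatal: the contradiction rests entirely on an off-by-one in how phases are counted, and under the counting the paper itself uses, your execution is not a counterexample. In your schedule the $3f$ witness messages from $S_2\cup S_3\cup S_4$ reach $p$ at time exactly $r+2$, at which instant $p$ holds $4f = n-f \ge \floor{\frac{n+f}{2}}+1$ witnesses and is free to deliver; nothing you prove prevents delivery \emph{at} $r+2$. You declare $r+2$ to be ``the third phase,'' but run H-BRB[5f+1] (with $n\ge 5f+1$) through exactly your schedule: a node in the fast group likewise receives its $(n-f)$-th $\echo$ at exactly $r+2$ and delivers there, and the paper counts that protocol as completing in $2$ rounds --- indeed Theorem~\ref{theorem:2_round_impossibility} is invoked precisely to explain why H-BRB[5f+1]'s round count is one \emph{less} than what other protocols achieve. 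So delivery at $r+2$ must count as ``within 2 phases,'' and your node $p$ violates nothing. The tell is that your argument never genuinely uses $n\le 5f$: your closing inequality is $f < 3f+1$, which holds for every $n>f$, so if the argument were sound it would equally rule out H-BRB[5f+1]; your parenthetical claim that ``once $n\ge 5f+1$ the obstruction vanishes'' is not supported by anything in your construction.

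The paper's obstruction is informational rather than temporal, which is why it is immune to this boundary issue and why it genuinely needs $n\le 5f$. There, the source $b\in B$ equivocates ($m_1$ to $S_1$, $m_2$ to $S_2$); Lemma~\ref{lem:Uniqueness_1} locks each non-faulty node into supporting at most one message per index; $B$ steers $S_3\cup S_4$ into becoming indirect witnesses of $m_1$; and by the end of phase $r+2$ a node $s_2\in S_2$ has at most $3f$ witnesses for $m_1$ (from $S_1\cup S_3\cup S_4$) and at most $2f$ for $m_2$ (from $S_2\cup B$) --- both short of the $\floor{\frac{n+f}{2}}+1 = 3f+1$ threshold of Lemma~\ref{Obsv:Necessary} no matter where the phase boundary is drawn, because the missing witness messages were never sent, not merely still in flight. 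The margin $3f$ versus $3f+1$ is exactly where $n=5f$ enters. To salvage your approach you would have to show that $p$ cannot deliver at $r+2$ either, and that is false; I would recommend switching to an equivocation-based argument along the paper's lines.
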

\begin{proof}
We do a proof by contradiction. 
Assume Algorithm $\mathbb{A}$ ensures reliable broadcast in 2 phases of communication after a non-faulty node hears about this message. Recall that all messages have a delay of $1$ time unit unless otherwise mentioned and phase $r$ refers to the time interval $[r,r+1)$. Consider a prefix of an execution we term Exec-2 as follows. 
\\\textbf{{Exec-2:}}
\begin{itemize}
	\item \textbf{Phase $r$:} Let $b\in B$ equivocate and send a message $m_1$ to nodes in $S_1$ and message $m_2(\ne m_1)$ to nodes in $S_2$ 
	in phase $r$. Nodes in $S_1$ and $S_2$ are direct witnesses before the beginning of the next phase. 
	\item \textbf{Phase $r+1$:}\\ Nodes in $B$ collude and   send contradicting witness messages as follows: $witness(m_1)$ to nodes in $S_1\cup S_3\cup S_4$ and $witness(m_2)$ to nodes in $S_2$.  Nodes in $S_3\cup S_4$ do nothing.
	
	Lemma~\ref{lem:Uniqueness_1} shows that, nodes cannot be witnesses to more than one message per sender as this will violate Property 3 (Agreement). As a result nodes in $S_1$ only send $witness(m_1)$ message to all nodes, and nodes in  $S_2$ only send $witness(m_2)$ message to  all nodes.

	\item \textbf{Phase $r+2$:}\\
	Nodes in $S_3\cup S_4$ send out $witness(m_1)$ messages to all nodes as they are indirect witnesses to message $m_1$. 
	
	By the correctness of $\mathbb{A}$ and our assumption at the beginning of the proof, all nodes reliably deliver a message, either  $m_1$ or $m_2$ by the end of this phase. 
	WLOG, let the message that each node delivers is $m_1$. Thus all nodes should have received at least $(n+f)/2 +1 = 3f+1$ $witness(m_1)$ messages (from Lemma~\ref{Obsv:Necessary}) and at most 
	$n-f = 4f$ $witness(m_1)$ messages (from Observation~\ref{Obsv:Sufficient}) before reliably delivering it. 
	
	However, by the end of phase $r+2$, 
	any node $s_2\in S_2$ ends up not getting $\ge 3f+1$ messages supporting either $m_1$ or $m_2$. 
	Node $s_2$ receives  at most $2f$ $witness(m_2)$ messages (from $B$ and $S_2$) and at most $3f$ $witness(m_1)$ messages (from $S_1\cup S_3 \cup S_4$) by the end of phase $r+2$. 
	
	Node  $s_2$ thus has to wait on $2f$ more $witness(m_2)$ messages or $f$ more $witness(m_1)$ messages to satisfy the sufficiency condition in Observation~\ref{Obsv:Sufficient}, which do not arrive after at the end of phase $r+2$. 
\end{itemize}

This violates Property 5 (Eventual Termination).
\end{proof}

\subsection{Impossibility of Byzantine RB using constant-size helper messages in $4$ phases} 
Here, we consider algorithms where the non-source sends out constant-size \textit{helper} messages to ensure reliable broadcast except  
when a receiving node specifically requests for the original message. A node $j$ does not send out requests for the original message until it is absolutely sure that at least one non-faulty node has reliably delivered a message $m$ associated with some tag $h$ different from the message ($m',h$), $j$ has received. Note that $j$ might not even have  received $m'$, i.e., $m' = \perp$. 

\begin{restatable}{theorem}{impossibility}\label{thm:impossibility2}
	For $n \leq 5f+1$, it is impossible to guarantee reliable broadcast in $4$ phases of communication if using constant sized helper messages,  once a non-faulty node hears about a broadcast message . 
	\begin{proof}
Assume a system with size $n =6$ containing nodes $\{1,2,3,4,5,b\}$ where node $b$ is byzantine faulty. Consider the following execution:
\begin{itemize}
\item \textbf{Phase $r$:}  Node $b$ performs a Reliable-Broadcast of message $(m,h)$ to all nodes. However, it equivocates and sends $(m',h)$ to node $1$ and $(m,h)$ to the others. 
\item \textbf{Phase $r+1$:}  Since we are considering algorithms that send out messages of constant size for non-senders, unless specifically requested, 
node $1$ sends out \textit{helper($m'$)} messages for $m'$ and nodes $\{2,3,4,5\}$ send out \textit{helper($m$)} messages for $m$. Node $b$ equivocates again and sends out \textit{helper($m'$)}  to node $1$ and \textit{helper($m$)} to all others.
\item \textbf{Phase 2:} Nodes $2,4,4,5$ have now heard \textit{helper($m$)} messages supporting $m$ from $5 = n-f$ nodes. We know from Observation~\ref{Obsv:Sufficient}, that a node cannot wait for more messages to reliably deliver a message. Thus these four nodes reliably deliver $m$ and send out helper messages of acceptance. Node $1$ has heard of only one original message $m'$ and \textit{helper($m$)} messages about message $m\ne m'$. Thus it is unable to deliver any message associated with the tag $h$ at this moment. Note that this is the first round when node $1$ hears about the existence of a message different from $m'$ associated with tag number $h$. 
\item \textbf{Phase $r+3$:}  Node $1$ hears of accept messages from $n-2f$ nodes (from $2,3,4,5$) about a message $m \ne m'$. Thus, since more than $f$ nodes have accepted a different message that node $1$ has not received yet, node $1$ sends out a request for the original message to any $f+1$ nodes that sent out the accept messages. WLOG, let node $1$ send request messages to nodes $2$ and $3$.
\item \textbf{Phase $r+4$:}  Nodes $2$ and $3$ send out the original messages as a response to $1$'s request. Node $1$ is still unable to reliably deliver any message associated with the tag $h$. 
\end{itemize}
This violates Property 5 (Eventual Termination) in Section~\ref{subsection:properties}. 

In the next phase $r+5$, node $1$ sends out a decision to accept the original message $m$ after receiving it from nodes $2$ and $3$. Thus in phase $r+5$ node $1$ finally reliably delivers $m$. 
\end{proof}

\end{restatable}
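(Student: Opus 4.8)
The plan is to establish the impossibility by exhibiting a single adversarial execution, in the same style as Theorem~\ref{theorem:2_round_impossibility}, but now leaning on the extra structural restriction that characterizes this algorithm family: a non-source node issues a request for the original content only after it is \emph{certain} that some non-faulty node has already reliably delivered a conflicting message. I would work with the smallest interesting instance, $n=6$ and $f=1$ (the construction lifts verbatim to any $3f+1 \le n \le 5f+1$ by replacing single nodes with disjoint sets of size $f$; for $n<3f+1$ reliable broadcast is impossible outright, so the theorem is trivial there). Partition the nodes into an ``isolated'' node $1$, a ``majority'' group $M=\{2,3,4,5\}$ of size $n-2f$, and the Byzantine node $b$, and assume toward a contradiction that an algorithm $\mathbb{A}$ in this family reliably delivers within $4$ phases of the first moment a non-faulty node hears of a broadcast.

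First I would describe the equivocation: in phase $r$ the Byzantine source $b$ runs Reliable-Broadcast but delivers $(m',h)$ to node $1$ and $(m,h)$ with $m\neq m'$ to everyone else, and in every later phase $b$ keeps equivocating, sending $\mathrm{helper}(m')$ toward node $1$ and $\mathrm{helper}(m)$ toward $M$. Next I would argue the majority is forced to deliver $m$ fast: by the end of phase $r+2$ each node in $M$ has received $\mathrm{helper}(m)$ from the $n-2f$ members of $M$ plus $b$, i.e.\ from $n-f$ witnesses, and by Observation~\ref{Obsv:Sufficient} it cannot wait for more, so it delivers $m$ and announces this with an accept-type helper message. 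Since $\mathbb{A}$ is correct, Property~5 (Eventual Termination) now forces node $1$ eventually to deliver $m$ as well; the remaining work is to lower-bound how long node $1$ must take.

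The core of the argument would show node $1$ cannot deliver before a full request/forward round trip. I would argue: (i) node $1$ cannot deliver $m'$, since node $1$ is the only node that ever witnesses $m'$, so that witness count stays at $1 < \lfloor (n+f)/2 \rfloor + 1$ (Lemma~\ref{Obsv:Necessary}); moreover, having already sent $\mathrm{witness}(m')$, node $1$ may by Lemma~\ref{lem:Uniqueness_1} never switch to witnessing $m$, so it is stuck on the normal path; (ii) node $1$ cannot deliver $m$ either, because with only constant-size helper messages it never holds $m$'s content, and by the defining restriction of the family it may not issue a request until it has collected $f+1$ accept messages for a message different from $m'$ — the first such evidence reaches it only in phase $r+3$. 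Then a request sent in phase $r+3$ arrives at $M$ in phase $r+4$, the forwarded content returns in phase $r+5$, and only then can node $1$ deliver; since node $1$'s first involvement with the broadcast is in phase $r$, delivery at phase $r+5$ is strictly more than $4$ phases later, contradicting the assumption and violating Property~5.

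The main obstacle I anticipate is the bookkeeping that rules out any shortcut for node $1$: I must verify that from node $1$'s local view the scenario is indistinguishable from a legitimate broadcast of $m'$ (so node $1$ may not safely deliver $m'$, nor $m$, early), and that the adversarial schedule is a legal execution — at least $n-f$ valid timed views, channel assumptions respected, all delays finite and consistent with the phase convention. A secondary subtlety is justifying the two-phase cost of the request/forward exchange under worst-case asynchrony, and confirming that ``$f+1$ accepts'' really is the earliest certainty trigger available to node $1$ in this family rather than some other event that could fire sooner.
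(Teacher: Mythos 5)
Your proposal is correct and follows essentially the same route as the paper's proof: the identical $n=6$, $f=1$ instance with the Byzantine source equivocating between node $1$ and the majority $\{2,3,4,5\}$, the majority forced to deliver $m$ by phase $r+2$ via Observation~\ref{Obsv:Sufficient}, node $1$ unable to request the content until it collects the accept evidence in phase $r+3$, and the request/forward round trip pushing delivery to phase $r+5$. The extra care you propose (lifting to general $f$, invoking Lemma~\ref{lem:Uniqueness_1} to pin node $1$ to $m'$, and checking legality of the schedule) only tightens details the paper leaves implicit.
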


\commentOut{
\begin{proof}
Assume a system with size $n =6$ containing nodes $\{1,2,3,4,5,b\}$ where node $b$ is byzantine faulty. Consider the following execution:
\begin{itemize}
\item \textbf{Phase $r$:}  Node $b$ performs a Reliable-Broadcast of message $(m,h)$ to all nodes. However, it equivocates and sends $(m',h)$ to node $1$ and $(m,h)$ to the others. 
\item \textbf{Phase $r+1$:}  Since we are considering algorithms that send out messages of constant size for non-senders, unless specifically requested, 
node $1$ sends out \textit{helper($m'$)} messages for $m'$ and nodes $\{2,3,4,5\}$ send out \textit{helper($m$)} messages for $m$. Node $b$ equivocates again and sends out \textit{helper($m'$)}  to node $1$ and \textit{helper($m$)} to all others.
\item \textbf{Phase 2:} Nodes $2,4,4,5$ have now heard \textit{helper($m$)} messages supporting $m$ from $5 = n-f$ nodes. We know from Observation~\ref{Obsv:Sufficient}, that a node cannot wait for more messages to reliably deliver a message. Thus these four nodes reliably deliver $m$ and send out helper messages of acceptance. Node $1$ has heard of only one original message $m'$ and \textit{helper($m$)} messages about message $m\ne m'$. Thus it is unable to deliver any message associated with the tag $h$ at this moment. Note that this is the first round when node $1$ hears about the existence of a message different from $m'$ associated with tag number $h$. 
\item \textbf{Phase $r+3$:}  Node $1$ hears of accept messages from $n-2f$ nodes (from $2,3,4,5$) about a message $m \ne m'$. Thus, since more than $f$ nodes have accepted a different message that node $1$ has not received yet, node $1$ sends out a request for the original message to any $f+1$ nodes that sent out the accept messages. WLOG, let node $1$ send request messages to nodes $2$ and $3$.
\item \textbf{Phase $r+4$:}  Nodes $2$ and $3$ send out the original messages as a response to $1$'s request. Node $1$ is still unable to reliably deliver any message associated with the tag $h$. 
\end{itemize}
This violates Property 5 (Eventual Termination). 

In the next phase $r+5$, node $1$ sends out a decision to accept the original message $m$ after receiving it from nodes $2$ and $3$. Thus in phase $r+5$ node $1$ finally reliably delivers $m$. 
\end{proof}
}

\commentOut{++++++++++++++++++
\begin{proof}
    Assume by contradiction Algorithm $\mathbb{A}$ reliably delivers message $m$ at node $p$ after  receiving $\floor{\frac{n+f}{2}}$ $witness(m)$ messages at $p$.
    
    We now consider a scenario where Property 3 (Agreement) is violated. 
    Node $s_4$ equivocates and sends out message $m_1$ to node $s_1$, message $m_2$ to node $s_2$ and message $m_3$ to node $s_3$ associated with index $h$. Each non-faulty node sends out a witness message as per the message received from $s_4$ with index $h$. Since the sender is a direct witness, node $s_4$ equivocates again and sends out $witness(m_1)$ to $s_1$, $witness(m_2)$ to $s_2$ and $witness(m_3)$ to $s_3$. \lewis{It is not clear from the definition above that the source itself can send out witness messages.}
    
    Note that in this system, 
    $\floor{\frac{n+t}{2}} = \floor{\frac{4+1}{2}} = 2$. Thus nodes wait for only $2$ $witness(m)$ messages to deliver $m$. Thus node $s_1$ delivers message $m_1$, node $s_2$ delivers message $m_2$ and node $s_3$ delivers message $m_3$ with respect to index $h$ violating agreement.


\end{proof}
}

\newcommand{\framework}{RMB}
\newcommand{\frameworkSpace}{\framework\hspace{2pt}}
\section{Evaluation}\label{section:evaluation}

We evaluate the performances of RB protocols through simulations over realistic environments. 
We design and implement a configurable and extensible benchmarking platform for reliable protocols over asynchronous message-passing networks, \textit{Reliability Mininet Benchmark (\framework)}. In particular, RMB is appropriate for evaluating protocols over networks within a datacenter or a cluster.
In this section, we describe the framework, including its features and parameters first, then report performance numbers of the proposed protocols and some prior algorithms using \framework.


\subsection{The Architecture of \framework}


RMB is built on top of Mininet \cite{Mininet1,Mininet2}, and tailored toward the generic abstraction for distributed algorithms depicted in Figure \ref{fig:layer}. The architecture of RMB is presented in Figure \ref{fig:framework2}. 
The application in RMB is a workload generator that generate reliable broadcast requests and collect and report statistics. The protocol layer contains the RB protocols that we implement. RMB is extensible in the sense that as long as the protocol implementation follows the pre-defined interface, the implementation can be evaluated using RMB. Finally, the network layer is simulated by Mininet and the network manager that we implement. 
RMB users can easily use the script we provide to configure the network conditions, e.g., delay, jitter, bandwidth, etc.
By design, RMB components (application, protocol, and network) are run in separate processes, and the entire RMB is simulated on a single machine using Mininet. The layers are implemented using Go\footnote{https://golang.org/} and we provide Python scripts to launch RMB.

Network parameters can be easily configured within a single YAML file. Benchmark managers and protocols are invoked by a Python script after the network is initialized. Advanced users can also program a topology and try various link parameters. A key benefit of \frameworkSpace is to free user from tedious work of configuring environments (regarding networks and computation power) and faulty behaviors. 
\subsection{RMB Benchmark Workflow}
\label{app:RMB-workflow}

\frameworkSpace first reads network parameters in a configuration YAML file to create a Mininet network with a preset topology (one of SingleSwitchTopo, LinearTopo, TreeTopo, and FatTreeTopo, or a user-defined one), with the desired number of hosts (with CPU limit or not) and types of links (e.g., with different artificial delay, packet jitters, bandwidth constraints).

We have programmed reliable broadcast protocols mentioned in previous sections with respect to the benchmark manager's interface. After specifying binary files of these two programs and filling in parameters required by them in the YAML file, a user of \frameworkSpace can starts all three layers of \frameworkSpace with a single command (that fires up the Python script). After a round of simulation is complete, each benchmark manager would generate statistics that can be later analyzed and graphed through analytical scripts of \framework.


\begin{figure}[h]
\centering
\includegraphics[trim={4cm 2cm 0.5cm 1cm},clip, width=0.47\textwidth]
{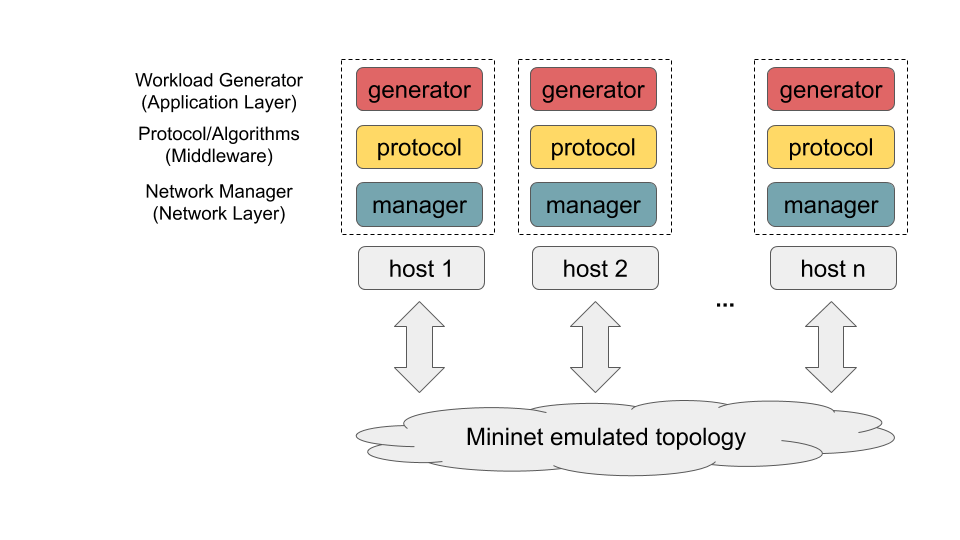}
\caption{The architecture of \framework}
\label{fig:framework2}
\end{figure}

\subsubsection*{Mininet}
The bottom layer (gray boxes in figure \ref{fig:framework2}) is a virtualized network, created by Mininet \cite{Mininet1, Mininet2}. Mininet is a battle-tested software that is widely used in prototyping Software-Defined Networks (SDNs). Our python starts-up script calls the Mininet library to start a virtual network consist of hosts, links, switches and a controller before the start of simulation. 
A virtual host emulates a node in a networked computing cluster, but in essence, it is a shell in a network namespace. Other three layers of \frameworkSpace are Linux applications that run on each host inside Mininet. Hosts do not communicate with each other directly, instead, they connect to virtual switches through Mininet links (Linux virtual Ethernet device pairs). The switches are also connected through links, if there are more than one. These controller and switches together make a good emulation of hardware in a real-world network.

We choose to use virtualized network owing to several benefits, including scalability (e.g., no limit on number of ports for a virtual switch), almost-zero setup time (i.e., a command can start the entire network), fine-grained control over network artifacts (e.g., delay and jitter), and no concern with network bandwidth fluctuations.

\subsubsection*{Manager Layer}
We have one (network) manager for each host is to manage data communication between protocol layer and other hosts in the network. There are four go routines for separate responsibilities: (i) receiving message from the protocol layer, (ii) receiving message from other hosts, (iii) sending to protocol layer, and (iv) sending message to other hosts. Another responsibility is to control the faulty behavior if the current node is configured to be Byzantine node, e.g., randomly corrupt messages. 

\subsubsection*{Protocol Layer (RB Algorithms)}
The middle layer implements the RB protocols  we want to evaluate. 
Each instance is paired up with a manager we discussed above, and thus does not need to know explicitly the existence of other manager/protocol instances. Such a design choice allows researchers to implement new protocols and benchmark them at ease.
In our RB protocols, there are two go routines in this layer. One is responsible for sending message to the manager layer, and the other one is responsible for reading messages from the manager layer and then perform corresponding action. That is, we implemented an event-driven algorithm as in our pseudo-code. Note that we make minimal assumption in this layer; hence, potentially, future RMB users can implemented in their favorite programming language and the algorithms do not have to be event-driven. For the hash function, we used Golang default package hmac512\footnote{https://golang.org/pkg/crypto/hmac/}, and open source Golang erasure coding package written by klauspost \footnote{https://github.com/klauspost/reedsolomon}.

\subsubsection*{Application Layer (Workload Generator)}
The top layer implements the workload generator in RMB.
There are two roles: (i) issue reliable-broadcast commands following a specified workload (e.g., size, frequency), and (ii) collect and calculate statistics (latency and throughput).


\subsection{Performance Evaluation}


\subsubsection*{Simulation Setup}

We perform the performance evaluation using \frameworkSpace on a single virtual machine (VM), equipped with 24 vCPU and 48 GB memory on Google Cloud Platform (GCP). Ubuntu 18.04.4 LTS (Bionic Beaver) runs on the VM as the OS. 
When no network artifact is specified, the default RTT in a Mininet emulated topology, from a host to another, is between 0.06 ms to 0.08 ms.

In the evaluation below, we do not present the result for Patra's algorithm \cite{Patra11}, which has optimal complexity $O(nL)$. The reason is its high computation complexity. It requires roughly $400$ ms to complete for a network of $32$ nodes, which makes it much slower than the other protocols we test.


\subsubsection*{Evaluation: Topology and Bandwidth Constraints}
RMB allows us to easily evaluate our algorithms in different topology and bandwidth constraints.
We test in three different network topologies with
$n=5$ and $f=0$: (i) \textit{Linear topology}: 5 switches with one host per switch; (ii) \textit{Tree topology}: tree depth = 3, and fan-out = 2; and (iii) \textit{Fat Tree topology}: 5 edges, with each host per edge.

For each data point, the source performs RB $2,000$ times with message size $1,024$ bytes. We record the throughput, calculated as the number of reliable-accept per seconds. We have test our Hash- and EC-BRB's and Bracha's RB \cite{Bracha:1987:ABA:36888.36891}. We also test a non-fault-tolerant broadcast (denoted as Broadcast in the table) as a baseline. In Broadcast, the source simply performs $n$ unicasts, and each node accepts a message when it receives anything from source. It provides the highest performance and no fault-tolerance.

The results are presented in Tables \ref{t:linear}, \ref{t:tree}, and \ref{t:fat-tree}.
It is clear that each algorithms perform differently in different topology, but it is difficult to observe a meaningful pattern. This is also why we believe RMB is of interests for practitioners. It provides a lightweight evaluation for different protocols in different scenarios.

One interesting pattern is that with limited bandwidth, our algorithms outperforms Bracha's except for EC-BRB[3f+1] in certain scenarios due to high computation cost. Moreover, H-BRB[3f+1] are within $50\%$ of Broadcast's performance. 

\begin{table}[ht]
    \centering
    \begin{tabular}{|c|c|c|}
    \hline
    Algorithm  & Bandwidth(Mbits/s) & Throughput \\
    \hline
    \hline
    H-BRB[3f+1]  & unlimited & 1041\\
    EC-BRB[3f+1]  & unlimited & 1047\\
    EC-BRB[4f+1] & unlimited & 967\\
    Broadcast  & unlimited & 16303\\
    Bracha  & unlimited & 4604\\
    H-BRB[3f+1] & 42 & 746\\
    EC-BRB[3f+1]  & 42 & 301\\
    EC-BRB[4f+1]  & 42 & 426\\
    Broadcast  & 42 & 1131\\
    Bracha  & 42 & 338\\
    \hline
    \end{tabular}
    \caption{Linear Topology}
    \label{t:linear}
\end{table}

\begin{table}[ht]
    \centering
    \begin{tabular}{|c|c|c|}
    \hline
    Algorithm  & Bandwidth(Mbits/s) & Throughput \\
    \hline
    \hline
    H-BRB[3f+1]  & unlimited & 1117\\
    EC-BRB[3f+1]  & unlimited & 1152\\
    EC-BRB[4f+1] & unlimited & 965\\
    Broadcast  & unlimited & 14604\\
    Bracha  & unlimited & 4521\\

    H-BRB[3f+1]  & 42 & 699\\
    EC-BRB[3f+1]  & 42 & 301\\
    EC-BRB[4f+1]  & 42 & 414\\
    Broadcast  & 42 & 1139\\
    Bracha  & 42 & 334\\
    \hline
    \end{tabular}
    \caption{Tree Topology}
    \label{t:tree}
\end{table}

\begin{table}[ht]
    \centering
    \begin{tabular}{|c|c|c|}
    \hline
    Algorithm  & Bandwidth(Mbits/s) & Throughput \\
    \hline
    \hline
    H-BRB[3f+1]  & unlimited & 1490\\
    EC-BRB[3f+1]  & unlimited & 1072\\
    EC-BRB[4f+1] & unlimited & 772\\
    Broadcast  & unlimited & 16080\\
    Bracha  & unlimited & 4216\\
    H-BRB[3f+1] & 42 & 754\\
    EC-BRB[3f+1]  & 42 & 370\\
    EC-BRB[4f+1]  & 42 & 532\\
    Broadcast  & 42 & 1141\\
    Bracha  & 42 & 518\\
    \hline
    \end{tabular}
    \caption{Fat Tree Topology}
    \label{t:fat-tree}
\end{table}

\subsubsection*{Synchrony vs. Asynchrony}

Even though synchronous Byzantine agreement protocols do not work in asynchrony in general, they serve as a good baseline.\footnote{Some might adapt synchronous algorithms to work in a practical setting. For example, in \cite{Liang2012}, it is argued that the proposed synchronous algorithms are appropriate in a datacenter setting.} In Table \ref{t:sync}, we compare our algorithms with NCBA and Digest from \cite{Liang2012}. NCBA is an erasure-coding based algorithm, whereas Digest is a hash-based algorithm. Both do not work if a node may crash fail in an asynchronous network. In this set of experiment, we adopt the topology of a single switch, $n=4$, message size $1024$ bytes, and $2,000$ reliable broadcasts.

Interestingly, the performances are close. H-BRB[3f+1] even beats the two synchronous algorithms by around $20\%$. Note that in our implementation, we favor NCBA and Digest by skipping the expensive dispute control phase.

\begin{table}[ht]
    \centering
    \begin{tabular}{|c|c|}
    \hline
    Algorithm & Throughput \\
    \hline
    \hline
    H-BRB[3f+1] & 2038\\
    EC-BRB[3f+1] & 1411\\
    EC-BRB[4f+1] & 1295\\
    Digest & 1674\\
    NCBA & 1601\\
    \hline
    \end{tabular}
    \caption{Synchronous vs Asynchronous}
    \label{t:sync}
\end{table}

\subsubsection*{Hash or EC?}
In the final set of experiments, we provide guidance on how to pick the best algorithms given the application scenario. We use the single switch topology, $n=20$, and $100$ rounds of reliable broadcasts. Each experiment has the configuration below:

\begin{itemize}
    \item Exp1: $f = 4$, source's bandwidth limitation $= 50$ KBytes/s, message size $= 1096$ bytes.
    
    \item Exp2: $f = 4$, source's bandwidth limitation $= 500$ KBytes/s, message size $= 1096$ bytes.
    
    \item Exp3: $f = 1$, source's bandwidth limitation $= 50$ KBytes/s, message size $= 1020$ bytes.
    
    \item Exp4: $f = 1$, source's bandwidth limitation $= 500$ KBytes/s, message size $= 1020$ bytes.
\end{itemize}

The result is presented in Table \ref{t:Hash-EC}. The numbers  follow the theoretical analysis: 
(i) H-BRB does not perform well with limited source's bandwidth; (ii) EC-BRB[3f+1] performs better with larger $f$; and (iii) EC-BRB[4f+1] performs better with smaller $f$.

\begin{table}[ht]
    \centering
    \begin{tabular}{|c|c|c|c|}
    \hline
    & H-BRB[3f+1] & EC-BRB[3f+1] & EC-BRB[4f+1]   \\
    \hline
    \hline
    exp1 & 1.6 & 2.5 & 1.2 \\
    exp2 & 19 & 7 & 1.2 \\
    exp3 & 1.6 & 1.3 & 2.3 \\
    exp4 & 19.3 & 12.6 & 16 \\
    \hline
    \end{tabular}
    \caption{Hash vs. EC}
    \label{t:Hash-EC}
\end{table}

\section{Related Work}
\label{s:related}



Reliable broadcast has been studied in the
context of Byzantine failures since the eighties~\cite{Bracha:1987:ABA:36888.36891, BirmanJ87, ChangM84, Garcia-MolinaS91}. 
Bracha~\cite{Bracha:1987:ABA:36888.36891} proposed a clean RB protocol. Birman and Joseph~\cite{BirmanJ87} introduce a multicast and broadcast protocol that enforces causal delivery in the  crash-recovery model. 
Chang and Maxemchuk~\cite{ChangM84} designed a family of reliable broadcast protocols with tradeoffs between the number of helper (low level) messages per high level broadcast message, the local storage requirements, and the fault-tolerance of the system. 
Eugster et al.~\cite{EugsterGHKK03} present a probabilistic gossip-based broadcast algorithm, which is  lightweight and scalable in terms of  throughput and memory management. Guerraoui et al.~\cite{GuerraouiKMPS19} generalize the Byzantine reliable broadcast abstraction to the probabilistic setting, allowing
each of the properties to be violated with a small probability to attain logarithmic per-node communication and
computation complexity. Lou and Wu~\cite{LouW04} introduce double-covered broadcast (DCB), to improve the performance of reliable broadcast in Mobile ad hoc networks (MANETs) by taking advantage of broadcast redundancy to improve the delivery ratio even in systems with high transmission error rate. Pagourtzis et al.~\cite{PagourtzisPS17}  study the Byzantine-tolerant Reliable Broadcast problem  under the locally bounded adversary model and the general adversary model and explore the tradeoff between the level of topology knowledge and the solvability of the problem.
Raynal~\cite{Raynal18} provides a detailed history of how the reliable broadcast abstraction has evolved since the eighties. He also illustrates the importance of reliable broadcast in systems with crash and Byzantine failures. A survey of existing reliable broadcast mechanism can be found in ~\cite{DefagoSU04,BaldoniCM06}. Bonomi et al.~\cite{BonomiFT2019} implement Byzantine-tolerant reliable broadcast in multi-hop networks connected by cannels that are not authenticate and improve previous bounds while preserving both the safety and liveness properties of the original definitions~\cite{Dolev81}.

In recent years there have been several attempts at implementing reliable broadcast more efficiently~\cite{LiangV11, PatraR11,Patra11, Liang2012,Drabkin, JeanneauRAD17}.
Jeanneau et al. ~\cite{JeanneauRAD17} perform crash-tolerant  reliable broadcast in a wait-free system ($f\le n-1$) equipped with a failure detector. Messages are transmitted over spanning trees that are built dynamically on top of a logical hierarchical hypercube-like topology built based on failure detectors providing logarithmic guarantees on latency and number of messages disseminated. 
Liang et al.~\cite{Liang2012} compare the performance of several synchronous Byzantine agreement algorithms NCBA and Digest. Fitzi and Hirt~\cite{FitziH06} present a synchronous RB protocol where the message is broadcast only $n$ times compared to prior implementations that use $n^2$ broadcasts. Patra~\cite{Patra11} presents an algorithm for Byzantine RB with optimal resilience and bit complexity $O(nL)$. In~\cite{PatraR11}, Patra and Rangan reported similar results that were only problematically correct. Choudhury \cite{Choudhury17} achieves Byzantine-tolerant reliable broadcast with just majority correctness whilst maintaining optimal bit complexity of $O(nL)$, using a non-equivocation mechanism provided by hardware. 

\section{Conclusion}

In this paper we present a family of reliable broadcast algorithms that tolerate faults ranging from crashes to Byzantine ones that are suited to bandwidth constrained networks using cryptographic hash functions and erasure codes. We provide theoretical algorithms, experimental results and impossibility proofs to state our case.
We hope this paper will provide guidance and reference for practitioners that work on fault-tolerant distributed systems and use reliable broadcast as a primitive. 

\section*{Acknowledgements}
This research is supported in part by National Science Foundation award CNS1816487. Any opinions, findings, and conclusions or recommendations expressed here are those of the authors and do not necessarily reflect the views of the funding agencies or the U.S. government.

\bibliographystyle{abbrv}
\bibliography{cite}

\end{document}